\documentclass[journal,12pt,onecolumn,draftclsnofoot]{IEEEtran}
%
\normalsize



\usepackage{color}
\usepackage{graphicx}
\usepackage{amsfonts}
\usepackage[cmex10]{amsmath}
\usepackage{cite}
\usepackage{algorithmic}
\usepackage[center]{caption}
\usepackage{epsfig}
\usepackage{latexsym}
\usepackage{epstopdf}
\usepackage{verbatim}
\usepackage{amsthm}
\usepackage[linesnumbered,ruled,vlined]{algorithm2e}
\usepackage{bbm}
\usepackage{chngcntr}

\usepackage{mathtools}
 
\graphicspath{{./figures/}}
\theoremstyle{definition}
\newtheorem{definition}{Definition}
\newtheorem{example}{Example}

\newtheorem{construction}{Construction}
\newtheorem{theorem}{Theorem}

\newtheorem{remark}{Remark}
\newtheorem{lemma}[theorem]{Lemma}
\newcounter{numrel}
\counterwithin*{numrel}{section}

\newcommand{\remove}[1]{}

 \usepackage{amssymb}
\hyphenation{op-tical net-works semi-conduc-tor}
\begin{document}
	%
\title{Harnessing Correlations in Distributed Erasure-Coded Key-Value Stores} 

\author{Ramy E. Ali and Viveck R. Cadambe
	\thanks{Ramy E. Ali (email: ramy.ali@psu.edu) and  Viveck R. Cadambe (email: viveck@engr.psu.edu) are with the School of Electrical Engineering and Computer Science, The Pennsylvania State University, University
				Park, PA 16802. This work is supported by NSF grant No. CCF 1553248 and is published in part in the Proceedings of the 2016 IEEE Information Theory Workshop \cite{ali2016consistent}.}
	}

%

\maketitle
\begin{abstract}
Motivated by applications of distributed storage systems to key-value stores, the multi-version coding problem was formulated to efficiently store frequently updated data in asynchronous decentralized storage systems. Inspired by consistency requirements in distributed systems, the main goal in the multi-version coding problem is to ensure that the latest possible version of the data is decodable, even if the data updates have not reached some servers in the system. In this paper, we study the storage cost of ensuring consistency for the case where the data versions are correlated, in contrast to previous work where data versions were treated as being independent. We provide multi-version code constructions that show that the storage cost can be significantly smaller than the previous constructions depending on the degree of correlation, despite the asynchrony and the decentralized nature. Our achievability results are based on Reed-Solomon codes and random binning. Through an information-theoretic converse, we show that our multi-version codes are nearly-optimal, within a factor of $2$, in certain interesting regimes.

\end{abstract}
\IEEEpeerreviewmaketitle

\section{Introduction}
Distributed key-value stores are an important part of modern cloud computing infrastructure\footnote{A \emph{read-write} key-value store is a shared database that supports \emph{get} or read operations, and \emph{put} or write operations.}. Key-value stores are commonly used by several applications including reservation systems, financial transactions, collaborative text editing and multi-player gaming. Owing to their utility, there are numerous commercial and open-source cloud-based key-value store implementations such as Amazon Dynamo \cite{Decandia}, Google Spanner \cite{corbett2013spanner} and Apache Cassandra \cite{CassandradB}. Distributed data storage services including key-value stores commonly build fault tolerance and availability by replicating the data across multiple nodes. Unlike archival data storage services, in key-value stores, the data stored is updated frequently, and the time scales of data updates and access are often comparable to the time scale of dispersing the data to the nodes \cite{Decandia}. In fact, distributed key-value stores are commonly \emph{asynchronous} as the time scale of data propagation is unpredictable and different nodes can receive the updates at different points in time. In such settings, ensuring that a client gets the latest version of the data requires careful and delicate protocol design.   

 Modern key-value stores depend on  high-speed memory that is expensive as compared with hard drives to provide fast operations. Hence, the goal of efficiently using memory has motivated significant interest in erasure-coded key-value stores. In absence of consistency requirements, erasure-coded ``in-memory'' storage systems can significantly improve latency as compared to replication-based counter parts \cite{EC-Cache}. Systems research related to erasure-coded consistent data stores also has received significant interest (See, e.g., \cite{taranov2018fast} and also \cite{Giza}, which uses erasure coding for Microsoft's data centers even for non-archival consistent data storage services). Delta coding  \cite{deltacompression} is another classical technique that improves memory efficiency in data storage systems, where it is desired to store older versions of the data. Delta coding relies on the idea of compressing differences between subsequent versions to improve the storage cost.  

The main contribution of our paper is developing a coding-theoretic approach that combines erasure coding and delta coding to exploit correlations between subsequent updates and enable significant memory savings as compared to replication-based schemes and erasure coding approaches that do not exploit correlations. We begin by an overview of the consistent data storage algorithms, and then discuss the \emph{multi-version coding} framework, which is an information-theoretic framework for distributed storage codes tailor-made for consistent key-value stores.
\subsection{Overview of Key-Value Stores} 
The design principles of key-value stores are rooted in the distributed computing-theoretic abstraction known as \emph{shared memory emulation} \cite{Lynch1996}. The goal of the \emph{read-write} shared memory emulation problem is to implement a read-write variable over a distributed system. While there has been interest in archival data storage systems in coding theory, e.g. \cite{Dimakis, Tamo_Barg}, the shared memory emulation set up differs in the following aspects.
\begin{enumerate}
	\item \emph{Asynchrony:} a new version of the data may not arrive at all servers simultaneously.
	\item \emph{Decentralized nature:} there is no single encoder that is aware of all versions of the data simultaneously, and a server is not aware of which versions are received by other servers. 
\end{enumerate}
	 Shared memory emulation algorithms use \emph{quorum-based} techniques to deal with the asynchrony. Specifically, in a system of $n$ servers that tolerates $f$ crash-stop server failures \cite{Lynch1996}, a write operation sends a request to all servers and waits for the acknowledgments from any $c_W\leq n-f$ servers for the operation to be considered \emph{complete}. Similarly, a read operation sends a request to all servers and waits for the response of any $c_R \leq n-f$ servers. This strategy ensures that, for every complete version, there are at least $c \coloneqq c_W+c_R-n$ servers that received that version and responded to the read request. Shared memory emulation algorithms require that the \emph{latest} complete version, or a later version, must be recoverable by the read operation. This requirement is referred to as consistency\footnote{More specifically, our decoding requirement is inspired by the consistency criterion known as \emph{atomicity}, or \emph{linearizability}.} in distributed systems \cite{Lynch1996}. 
	 
	 In a replication-based protocol, where servers store an uncoded copy of the latest version that they receive, selecting $c_W$ and $c_R$ such that $c_W+c_R>n$ ensures consistency \cite{Decandia, Lynch1996}. Since for every complete write operation, there are $c$ servers that store the value of that write operation and respond to a given read operation, it seems natural to use a maximum distance separable (MDS) code of dimension $c$ to obtain storage cost savings over replication-based algorithms. However, the use of erasure coding in asynchronous distributed systems where consistency is important leads to interesting algorithmic and coding challenges. This is because, when erasure coding is used, no single server stores the data in its entirety; for instance, if an MDS code of dimension $c$ is used, each server only stores a fraction of $1/c$ of the entire value. Therefore, for a read operation to get a version of the data, at least $c$ servers must send the codeword symbols corresponding to this version. As a consequence, when a write operation updates the data, a server cannot delete the symbol corresponding to the old version before symbols corresponding to a new version propagate to a sufficient number of servers. That is, servers cannot simply store the latest version they receive; they have to store older versions as shown in Fig. \ref{MDS}. 
\begin{figure}[h]
	\centering
	\includegraphics[width=.65\textwidth,height=.25\textheight]{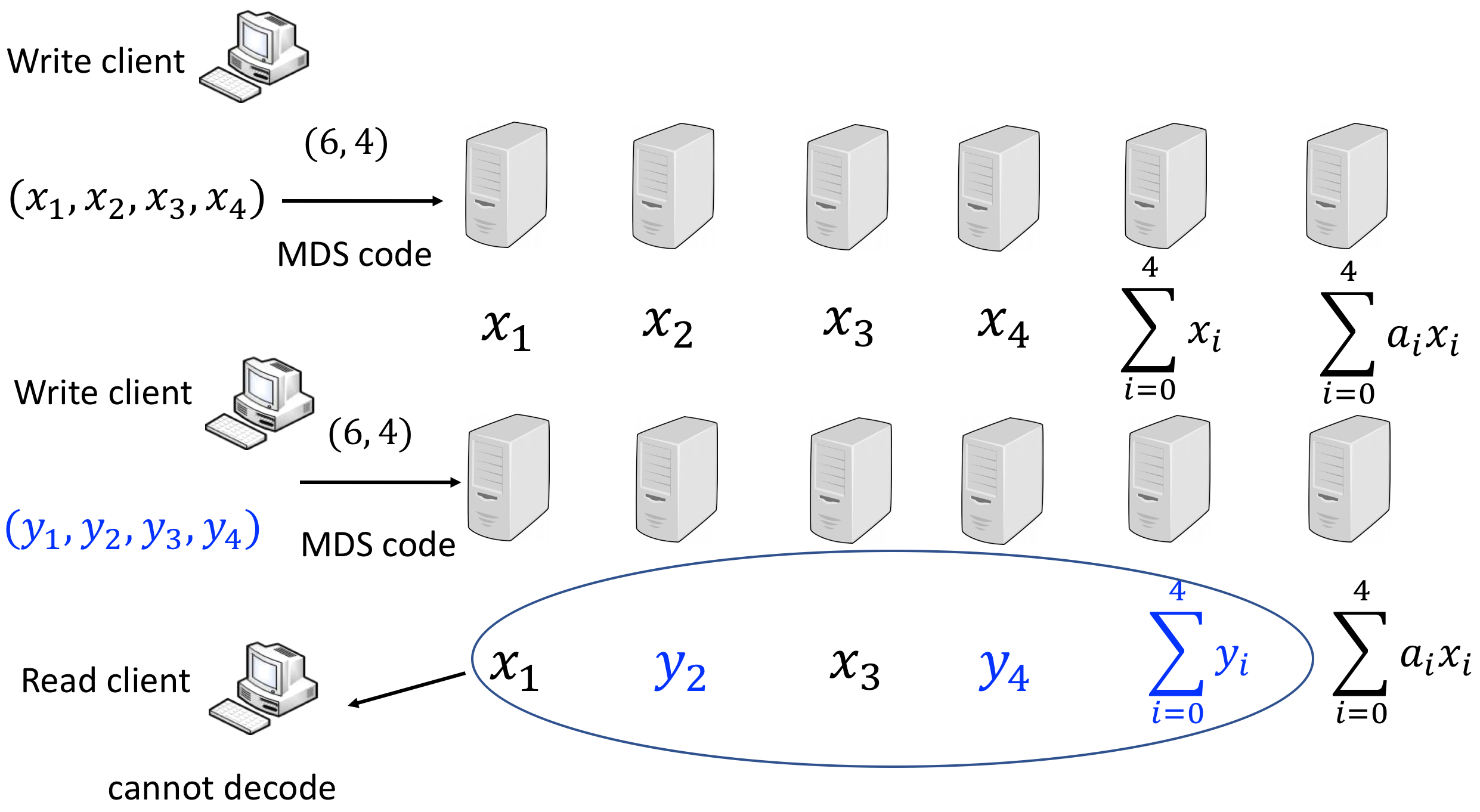}
	\caption{An erasure-coded algorithm where the servers only store the codeword symbol of the latest version is considered, where $n=6, c_W=5$ and $c_R=5$. The old value of the variable is $(x_1, x_2, x_3, x_4)$ and is updated to $(y_1, y_2, y_3, y_4)$. Since all servers may not receive the new codeword symbols simultaneously, a read operation may not be able to decode. \label{MDS}}
\end{figure}
Given that storing multiple versions is inevitable in consistent erasure-coded systems \cite{CadambeCoded_NCA, Dutta, multi_arxiv}, an important opportunity to improve memory efficiency is to exploit correlations between the various versions; this opportunity is the main motivation of our paper. We conduct our study through the \emph{multi-version coding} framework \cite{multi_arxiv}.
\color{black}
\subsection{Multi-Version Coding}
The \emph{multi-version coding} problem abstracts out algorithmic details of shared memory emulation while retaining the essence of consistent storage systems. Specifically, the multi-version coding problem \cite{multi_arxiv} considers a decentralized storage system with $n$ servers that tolerates $f$ crash-stop failures, where the objective is storing a message (read-write variable) of length $K$ bits with $\nu$ versions\footnote{We study the case where each version is $K$ bits long; although certain applications might benefit from dynamic allocation, several important applications use a fixed size for various versions of the value. Furthermore, popular key-value stores expose a fixed-sized value to the client and do not expose ``malloc''-type dynamic allocation.}. The versions are totally ordered; versions with higher ordering are referred to as later versions, and lower ordering as earlier versions. Each server receives an arbitrary subset of the $\nu$ versions, and encodes them. Because of the decentralized nature, a server is unaware of which versions are available at other servers. Inspired by the quorum-based protocols, we refer to any version that has reached at least $c_W$ servers as a \emph{complete} version. A decoder connects to any $c_R$ servers, and must recover the latest complete version, or a later version.

In \cite{multi_arxiv}, it was shown that there is a storage cost inevitable price for maintaining consistency in asynchronous decentralized storage systems. In multi-version coding, for any complete version, for any decoder, there are at least $c$ servers that have received that version and responded to the decoder. In the classical erasure coding model, where $\nu=1$, the Singleton bound dictates that the storage cost per server is at least $K/c$. However for $\nu > 1$, a server cannot simply store the codeword symbol corresponding to one version. In the case where the versions are independent, it was shown in \cite{multi_arxiv} that the storage cost per server is at least
$\frac{\nu}{c+\nu-1}K-\Theta(1)$. Since, for $\nu < c,$ we have $\frac{\nu}{c+\nu-1} \geq \frac{\nu}{2c}$, and  since the per-server cost of storing a version is $K/c$, we may interpret the result as follows: when the versions are independent, to compensate for the asynchrony and maintain consistency, a server has to store an amount of data that is, from a cost perspective, tantamount to at least $\nu/2$ versions, each stored using an MDS code of dimension $c$. 

Although the study of \cite{multi_arxiv} focuses on coding-theoretic aspects, the insights obtained from this study have been incorporated into distributed algorithms \cite{ zorgui2018storage} and a  lower-bound was developed in \cite{Cadambe_Wang_Lynch2016} on the storage cost of any read-write memory emulation algorithm by creating a worst-case execution mimicking the converse of \cite{multi_arxiv}. We believe that merging the coding-theoretic ideas in our paper and the algorithmic insights of \cite{zorgui2018storage} is an interesting area of future work.

\subsection{Contributions}
In this paper, we extend the scope of the multi-version coding problem to the case where the different versions are correlated. Specifically, we consider a decentralized storage system with $n$ servers storing $\nu$ possibly correlated versions of a message. We assume that each message version is $K$ bits long, and model the correlation between successive versions in terms of the bit-strings that represent them. Given a version, we assume that the subsequent version is uniformly distributed in the Hamming ball of radius $\delta_KK$ centered around that given version. Hence, this version can be represented using $\log Vol(\delta_K K, K)$ bits, where $Vol(\delta_K K, K)$ is the volume of the Hamming Ball of radius $\delta_K K$. We derive three main results for this system.
\begin{enumerate}
	\item We first study the case where $\delta_K$ is not known and propose an effective scheme based on Reed-Solomon code with a per-server storage cost of $\frac{K}{c}+   (\nu-1) \delta_K K (\log K + o(\log K))$ bits. This scheme obtains the $1/c$ \emph{erasure coding gain} for the first version and stores every subsequent version via delta coding with a cost of $\delta_K K (\log K+o(\log K))$ bits. Thus, this scheme is unable to simultaneously obtain the gains of both erasure and delta coding.

 \item We then study the case where $\delta_K$ is known and derive a random binning based scheme with a per-server storage cost of $\frac{K}{c}+\frac{\nu -1}{c} {\log Vol(\delta_K K, K)} + o(\log K)$ bits. From a cost viewpoint, this scheme is tantamount to storing one version using erasure coding with a cost of $K/c$ and performing delta and erasure coding for the subsequent versions leading to a cost of $\frac{\log Vol(\delta K, K)}{c}$ bits per version. This scheme outperforms our first scheme as it simultaneously obtains the gains of both erasure coding and delta coding for subsequent versions. We also show the existence of linear codes that obtain this storage cost. 

A cost of $\frac{K}{c}+\frac{\nu -1}{c} {\log Vol(\delta_K K, K)} + o(\log K)$ bits is readily achievable in a setting, where every server receives all the versions, and each server is aware that the other servers have indeed received all the versions. In such a setting, each server can store a fraction of $1/c$ of the first version it receives using an MDS code of dimension $c$. For a new version, each server can store $1/c$ of the compressed difference between this version and the old version using an MDS code of dimension $c$. However, this scheme would fail in our setting because of the decentralized asynchronous nature. For instance, a server which receives versions $1$ and $3$ needs to compress version $3$ with respect to version $1$ and then encodes it, but a different server that receives only versions $2$ and $3$ needs to compute the increment of version $3$ with respect to version $2$ and then encodes it; from a decoder's viewpoint, the erasure-coded symbols stored at the two servers are not compatible. Furthermore, the decentralized nature implies that the server that receives versions $1$ and $3$ must store some data that would enable successful decoding no matter what versions are received by the other servers. 
Handling the decentralized and asynchronous nature while achieving both the erasure and the delta coding gain is our main technical contribution. 

\item We extend the lower bound of \cite{multi_arxiv} to the case of correlated versions and show our random binning scheme is within a factor $2$ of the information-theoretic optimum in certain regimes.

\end{enumerate}
\subsection{Related Work}
The idea of exploiting the correlation to efficiently update, store or exchange data has a rich history of study starting from the classical Slepian-Wolf problem \cite{slepian1973noiseless,wyner1974recent, pradhan2003distributed, schonberg2004distributed, schonberg2007practical} for compressing correlated distributed sources. Encoding incremental updates efficiently is the motivation of the delta compression techniques used commonly in data storage. The notion of delta compression was refined in \cite{wang2015file,ma2012compression} by modeling the data updates using the edit distance; in particular, these references develop schemes that synchronize a small number of edits between a client and a server efficiently. While we note that the edit distance is relevant to applications such as collaborative text editing, we focus on the classical Hamming metric used more widely in coding theory for the sake of fundamental understanding and other applications that view the data as a table as in Apache Cassandra \cite{CassandradB}, and the writes update only a few entries of the table.

Exploiting correlations to improve efficiency in \emph{distributed} storage and caching settings has been of significant interest \cite{harshan2015compressed, milenkovic_updates, anthapadmanabhan2010update, mazumdar2014update, nakkiran2014fundamental, Prakash_Medard, Tulino_Correlated}. In \cite{harshan2015compressed} and \cite{milenkovic_updates}, coding schemes were developed that use as input, the old and the new version of the data, and output a code that can be used to store both versions efficiently. Capacity-achieving \emph{update-efficient} codes for binary symmetric and erasure channels were studied in \cite{anthapadmanabhan2010update,mazumdar2014update}, where a small change in the message leads to a codeword which is close to the original codeword in Hamming distance. In \cite{nakkiran2014fundamental}, the problem of minimizing the communication cost of updating a ``stale'' server that did not get an updated message, by downloading data from already updated servers, was studied and constructions and tight bounds were developed. A side information problem is presented in \cite{Prakash_Medard}, where the goal is to send an updated version to a remote entity that has as side information an arbitrary linear transform of an old version. The reference shows that the optimal encoding function is related to a maximally recoverable subcode of the linear transform associated with the side information. 

Although our problem has common ingredients with previous works, our setting differs as it captures the asynchrony, decentralized nature, and the consistency requirements. An important outcome of our study is that correlation can reduce storage costs despite these requirements.

\subsection*{Organization of the paper}
This paper is organized as follows. Section \ref{Model} presents the multi-version coding problem and the main results of this paper. In Section \ref{Code Constructions}, we provide our code constructions. Section \ref{Lower Bound on the storage cost} provides a lower bound on the storage cost. Finally, conclusions are discussed in Section \ref{Conclusion}.

\section{System Model and Background of Multi-version Codes}
\label{Model}

We start with some notation. We use boldface for vectors. In the $n$-dimensional space over a finite field $\mathbb{F}_p$, the standard basis column vectors are denoted by $\{ \mathbf e_1, \mathbf e_2, \cdots, \mathbf e_n \}$\color{black}. We denote the Hamming weight of a vector $\mathbf{x}$ by $w_H(\mathbf{x})$ and the Hamming distance between any two vectors $\mathbf{x_1}$ and $\mathbf{x_2}$ by $d_H(\mathbf{x_1}, \mathbf{x_2})$. For a positive integer $i$, we denote by $[i]$ the set $\lbrace 1, 2, \cdots, i\rbrace$.  For any set of ordered indices $S=\lbrace s_1, s_2, \cdots, s_{|S|}\rbrace \subseteq \mathbb{Z}$, where $s_1< s_2 < \cdots < s_{|S|}$, and for any ensemble of variables $\lbrace X_i : i \in S\rbrace$, the tuple $(X_{s_1}, X_{s_2}, \cdots, X_{s_{|S|}})$ is denoted by $X_S$. We use $\log (.)$ to denote the  logarithm to the base $2$ and $H(.)$ to denote the binary entropy function. We use the notation $[2^K]$ to denote the set of $K$-length binary strings. A \emph{code} of length $n$ and dimension $k$ over alphabet $\mathcal{A}$ consists of an injective mapping $\mathcal{C}:\mathcal{A}^{k} \rightarrow \mathcal{A}^{n}$. When $\mathcal{A}$ is a finite field and the mapping $\mathcal{C}$ is linear, then the code is referred to as a \emph{linear code}. We refer to a linear code $\mathcal{C}$ of length $n$ and dimension $k$ as an $(n,k)$ code. An $(n,k)$ linear code is called MDS if the mapping projected to \emph{any} $k$ co-ordinates is invertible. 

\subsection{Multi-version Codes (MVCs)}
We now present a variant of the definition of the multi-version code \cite{multi_arxiv}, where we model the correlations. We consider a distributed storage system with $n$ servers that can tolerate $f$ crash-stop server failures. \color{black}   The system stores $\nu$ possibly correlated versions of a message where $\mathbf W_u \in [2^K]$ is the $u$-th version, $u \in [\nu]$, and $K$ is the message length in bits. The versions are assumed to be totally ordered, i.e., if $u>l$, $\mathbf W_u$ is interpreted as a \emph{later} version with respect to $\mathbf W_l.$ We assume that $\mathbf W_{1}\rightarrow \mathbf W_{2} \rightarrow \ldots \rightarrow \mathbf W_{\nu}$ form a Markov chain. $\mathbf W_{1}$ is uniformly distributed over the set of $K$ length binary vectors. Given $\mathbf W_{m}, \mathbf W_{m+1}$ is uniformly distributed in a Hamming ball of radius $\delta_K K$, $B(\mathbf W_m, \delta_K K)=\{ \mathbf W: d_H(\mathbf W, \mathbf W_m) \leq \delta_K K \}$, and the volume of the Hamming ball is given by
\begin{align}
Vol(\delta_K K, K)=|B(\mathbf W_m, \delta_K K)| = \sum\limits_{j=0}^{\delta_K K}  \binom{K}{j}.
\end{align}
Given a correlation coefficient $\delta_K$, we denote the set of possible tuples $(\mathbf w_1, \mathbf w_2, \cdots, \mathbf w_{\nu})$ under our correlation model by $A_{\delta_K}$. We provide the formal definition next.  
\begin{definition}[$\delta_K$-possible Set of Tuples] 
	\label{def:possible-set}
	The set $A_{\delta_K}$ of $\delta_K$-possible set of tuples \\ $(\mathbf w_{1}, \mathbf w_{2}, \cdots, \mathbf w_{\nu})$ is defined as follows 
	\begin{align}
	&A_{\delta_K} (\mathbf W_{1}, \mathbf W_{2}, \cdots, \mathbf W_{\nu})= \{  (\mathbf w_{1}, \mathbf w_{2}, \cdots, \mathbf w_{\nu}): \mathbf w_{1} \in [2^K], \mathbf w_{2} \in B(\mathbf w_1, \delta_K K),   \\ &\mathbf w_{3} \in B(\mathbf w_{2}, \delta_K K), \notag \cdots, \mathbf w_{\nu} \in B(\mathbf w_{\nu-1}, \delta_K K) \}.
	\end{align}
\end{definition}
\noindent We omit the dependency on the messages and simply write $A_{\delta_K}$, when it is clear from the context. Similarly, we can also define  the set of possible tuples $\mathbf w_{F_1}$ given a particular tuple $\mathbf w_{F_2}$, $A_{\delta_K}(\mathbf W_{F_1}|{\mathbf w_{F_2}})$, where $F_1, F_2$ are two subsets of $[\nu]$. 
\begin{remark}
Unlike the case of the twin binary symmetric source, in our model, the correlation coefficient $\delta_K$ is a function of $K$ in general and is not necessarily a constant. The more familiar expressions that involve entropies can be obtained when $\delta_K$ is equal to a constant $\delta$ using Stirling's inequality \cite{cover_thomas}. Specifically, for $\delta<1/2$, we have
\begin{align}
KH(\delta)-o(K) \leq \log Vol(\delta K, K) \leq K H(\delta).
\end{align} 
\end{remark}
\color{black}
The $i$-th server receives an arbitrary subset of versions $\mathbf{S}(i) \subseteq [\nu]$ that denotes the \emph{state} of that server. We denote the system state by $\mathbf{S}= \lbrace \mathbf{S}(1), \mathbf{S}(2), \cdots, \mathbf{S}(n)\rbrace \in \mathcal{P}([\nu])^n$, where $\mathcal{P}([\nu])$ is the power set of $[\nu]$. For the $i$-th server with state $\mathbf S(i)=\lbrace s_1, s_2, \cdots, s_{|\mathbf S(i)|} \rbrace$, where $s_1 < s_2 < \cdots <s_{|\mathbf S(i)|}$, the server stores a codeword symbol generated by the encoding function $\varphi_{\mathbf S(i)}^{(i)}$ that takes an input $\mathbf W_{\mathbf S(i)}$ and outputs an element from the set $[q]$ that can be represented by $\log q$ bits. In state $\mathbf S \in \mathcal{P}([\nu])^n$, we denote the set of servers that have received $\mathbf W_{u}$ by $\mathcal A_{\mathbf S}(u)=\{j \in [n]:  u \in \mathbf{S}(j)\}$ and a version $u \in [\nu]$ is termed \emph{complete} if $|\mathcal A_{\mathbf S}(u)| \geq c_W$, where $c_W \leq n-f$.
The set of complete versions in state $\mathbf S \in \mathcal P([\nu])^n$ is given by $\mathcal C_{\mathbf S} = \{u \in [\nu]: |\mathcal A_{\mathbf S}(u)| \geq c_W\}$ and the latest among them is denoted by  $L_{\mathbf S}\coloneqq\max \  \mathcal C_{\mathbf S}$.

 The goal of the multi-version coding problem is to devise encoders such that for every decoder that connects to any arbitrary set of $c_R \leq n-f$ servers, the latest complete version or a later version is decodable with probability of error that is at most $\epsilon$ while minimizing the per-server worst-case storage cost. We express this formally next. 
 \begin{definition}[$\epsilon$-error $(n,c_W, c_R, \nu, 2^K, q, \delta_K)$ multi-version code (MVC)]	\label{Multi-version Code Definition} An $\epsilon$-error \\ $(n,c_W, c_R, \nu, 2^K, q, \delta_K)$ multi-version code (MVC) consists of the following 
	\begin{itemize}
		\item encoding functions
	$	
		\varphi_{\mathbf S(i)}^{(i)} \colon {[2^K]}^{|\mathbf S(i)|} \to [q],\ \textrm{for every}\ i \in [n]\ \textrm{and every state}\  \mathbf S(i) \subseteq [\nu]$
		\item decoding functions $\psi_{\mathbf{S}}^{(T)} \colon [q]^{c_R} \to [2^K] \cup \lbrace \textit{NULL} \rbrace,$
\end{itemize}
that satisfy the following
	\begin{align*}
		\Pr \left[ \psi_{\mathbf S}^{(T)} \left( \varphi_{\mathbf S(t_1)}^{(t_1)}, \cdots, \varphi_{\mathbf S(t_{c_R})}^{(t_{c_R})} \right)  
		=\mathbf W_m   \ \text{for some} \ m \geq L_{\mathbf S},  \text{if} \ \mathcal C_{\mathbf S} \neq \emptyset \right] \geq 1-\epsilon,
		\end{align*}
for every possible system state $\mathbf{S} \in \mathcal{P}([\nu])^n$ and every set of servers $T=\lbrace t_1, t_2, \cdots, t_{c_R} \rbrace \subseteq [n]$, where the probability is computed over all possible tuples of the message versions.
 \end{definition}

We notice that set of possible tuples $A_{\delta_K}$ can be partitioned into disjoint sets as follows
\begin{align}
	A_{\delta_K}= A_{\delta_K, 1} \cup A_{\delta_K, 2},
\end{align}
where $A_{\delta_K, 1}$ is the set of tuples $(\mathbf{w}_1, \mathbf{w}_2, \cdots, \mathbf w_\nu) \in A_{\delta_K}$ for which we can decode successfully for all $\mathbf S  \in \mathcal{P}([\nu])^n$ and $A_{\delta_K, 2}$ is the set of tuples where we cannot decode successfully at least for one state $\mathbf S  \in \mathcal{P}([\nu])^n$. \color{black}
\begin{definition}[Storage Cost of a Multi-version Code]
	The storage cost of an $\epsilon$-error \\ $(n, c_W, c_R, \nu, 2^K, q, \delta_K)$ MVC is equal
	to $\alpha=\log q$ bits.
\end{definition}
\color{blue}
\color{black}
	We next present an alternative decoding requirement that is shown in \cite{multi_arxiv} to be equivalent to the multi-version coding problem defined above. For any set of servers $T \subseteq [n]$, note that $\max \cap_{i \in T} \mathbf S(i)$ denotes \emph{the latest common version among} these servers. The alternate decoding requirement, which we refer to multi-version coding problem with Decoding Requirement A, replaces $c_W, c_R$ by one parameter $c$. The decoding requirement requires that the decoder connects to any $c$ servers and decodes the \emph{latest common version} amongst those $c$ servers, or a later version.
\begin{definition} [$\epsilon$-error  $(n, c, \nu, 2^K, q, \delta_K)$ Multi-version code (MVC) with Decoding Requirement A] 
An $\epsilon$-error $(n,c, \nu, 2^K, q, \delta_K)$ multi-version code (MVC) consists of the following
	\begin{itemize}
		\item encoding functions
		$\varphi_{\mathbf S(i)}^{(i)} \colon {[2^K]}^{|\mathbf S(i)|} \to [q],\ \textrm{for every}\ i \in [n]\ \textrm{and every state}\  \mathbf S(i) \subseteq [\nu]$
		\item decoding functions
		$\psi_{\mathbf{S}}^{(T)} \colon [q]^{c} \to [2^K] \cup \lbrace \textit{NULL} \rbrace,$
\end{itemize}
that satisfy the following
    \begin{align*}
     \Pr \left[ \psi_{\mathbf{S}}^{(T)} \left( \varphi_{\mathbf S(t_1)}^{(t_1)}, \cdots, \varphi_{\mathbf S(t_c)}^{(t_c)} \right) = \mathbf W_m, \text{for some} \ m \geq \max \cap_{i \in T} \mathbf S(i), \ \text{if} \cap_{i \in T} \mathbf S(i) \neq \emptyset \right] \geq 1-\epsilon,
	\end{align*}
for every possible system state $\mathbf{S} \in \mathcal{P}([\nu])^n$ and every set of servers $T=\lbrace t_1, t_2, \cdots, t_{c} \rbrace \subseteq [n]$, where the probability is computed over all possible tuples of the message versions.
\end{definition} 
\color{black}
\noindent In this paper, we present our achievability results for decoding requirement A and Lemma \ref{achievability lemma} \cite{multi_arxiv} establishes the connection between the two decoding requirements.

\begin{lemma}
\label{achievability lemma}	
Consider any positive integers $n,c_W,c_R,c$ such that $c=c_W+c_R-n$.
An $\epsilon$-error $(n, c, \nu, 2^K, q, \delta_K)$ MVC with decoding requirement A exists if and only if  an $\epsilon$-error $(n, c_W, c_R, \nu, 2^K, q, \delta_K)$ MVC exists. 
\end{lemma}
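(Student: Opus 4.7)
My plan is to prove both implications by reusing the same encoders in each direction and adapting only the decoder; the single arithmetic identity driving the whole argument is $c = c_W + c_R - n$, equivalently $c_W - c = n - c_R$ and $c_R - c = n - c_W$.

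For the ``if'' direction, suppose we are given an $\epsilon$-error $(n, c_W, c_R, \nu, 2^K, q, \delta_K)$ MVC. I would keep its encoders and construct a Req A decoder as follows. Fix a system state $\mathbf S$ and a $c$-server read set $T$ with $\cap_{i \in T}\mathbf S(i) \neq \emptyset$, and let $\ell = \max \cap_{i \in T}\mathbf S(i)$. Partition $[n] \setminus T$ into disjoint sets $H_2$ of size $c_W - c$ and $H_1$ of size $c_R - c$ (which is possible because $(c_W - c) + (c_R - c) = n - c$), and define a virtual state $\mathbf S'$ by $\mathbf S'(i) = \mathbf S(i)$ for $i \in T$, $\mathbf S'(j) = \{\ell\}$ for $j \in H_2$, and $\mathbf S'(j) = \emptyset$ for $j \in H_1$. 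Then exactly $c + (c_W - c) = c_W$ servers hold version $\ell$ in $\mathbf S'$, so $\ell$ is complete and $L_{\mathbf S'} \geq \ell$; moreover $|T \cup H_1| = c + (c_R - c) = c_R$, so $T \cup H_1$ is a legal read quorum for the original MVC. The simulated Req A decoder invokes the original decoder on $T \cup H_1$: the encodings from $T$ are the actual inputs, while the encodings from $H_1$ are deterministic constants because the domain of each $\varphi_{\emptyset}^{(j)}$ is the singleton $[2^K]^0$. With probability at least $1 - \epsilon$ the original decoder returns $\mathbf W_m$ for some $m \geq L_{\mathbf S'} \geq \ell$, which is exactly the Req A guarantee.

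For the ``only if'' direction, suppose we have an $\epsilon$-error $(n, c, \nu, 2^K, q, \delta_K)$ MVC with Req A, and again reuse its encoders. Fix any state $\mathbf S$ with $\mathcal C_{\mathbf S} \neq \emptyset$ and any read quorum $T$ of size $c_R$. The latest complete version $L_{\mathbf S}$ has been received by at least $c_W$ servers, so the set $\mathcal A_{\mathbf S}(L_{\mathbf S})$ intersects $T$ in at least $c_W + c_R - n = c$ elements. Picking any $c$-subset $T' \subseteq T \cap \mathcal A_{\mathbf S}(L_{\mathbf S})$ and feeding its encodings into the Req A decoder $\psi_{\mathbf S}^{(T')}$ yields $\mathbf W_m$ with $m \geq \max \cap_{i \in T'}\mathbf S(i) \geq L_{\mathbf S}$ with probability at least $1 - \epsilon$, satisfying Definition \ref{Multi-version Code Definition}. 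I do not expect a substantive obstacle: the entire reduction rests on the pigeonhole identity $n - c_W = c_R - c$ and on the fact that empty-state encodings are automatically message-independent, both of which are immediate from the definitions.
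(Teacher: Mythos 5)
Your proof is correct: the quorum-intersection argument for the ``only if'' direction (any $c_R$-set meets the $c_W$ servers holding $L_{\mathbf S}$ in at least $c=c_W+c_R-n$ servers, and Requirement A on that $c$-subset yields $m\geq L_{\mathbf S}$) and the virtual-state simulation for the ``if'' direction (padding with $c_W-c$ servers holding $\{\ell\}$ and $n-c_W$ empty servers, whose encodings are message-independent constants, so that $\ell$ is complete and a legal $c_R$-quorum exists) are both sound and preserve the $\epsilon$ error guarantee. Note that this paper does not itself prove Lemma \ref{achievability lemma} but imports it from \cite{multi_arxiv}; your reduction is essentially the standard argument used there.
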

\remove{Although the mathematical definition of multi-version codes expects that all the versions are available at the server for encoding simultaneously, in practical systems, the versions arrive at the server one after another. 
Motivated by the requirement to capture scenarios where versions arrive one after another at the servers, reference \cite{multi_arxiv} defined the notion of \emph{causal} multi-version codes. Informally, the encoding function of a causal multi-version code  has the following property: for any state $S$ and any $j \in S$, the symbol stored by a server in state $S$ can be expressed as a function of a codeword symbol created using versions in $S-\{j\}$ and the $j$-th version. We anticipate that causal multi-version codes are more amenable to applications in practical systems, as the encoding does not depend on the order of their arrival.  We next provide the formal definition. 
\begin{definition}\textbf{(Causal Codes)}
\label{Causal Codes Definition}
A multi-version code is called causal if it satisfies:
for all $S \subseteq [\nu], j \in S, i \in [n]$, there exists a function
\begin{align*}
 \hat{\varphi}_{S, j}^{(i)}: [q] \times [2^K] \to [q], 
\end{align*}
such that
\begin{align*}
& \varphi_S^{(i)}(\mathbf W_S)= \hat{\varphi}_{S, j}^{(i)} (\varphi_{S \setminus \lbrace j \rbrace}^{(i)} (\mathbf W_{S \setminus \lbrace j \rbrace}), \mathbf W_j).
\end{align*}
\end{definition}}

We now make some remarks interpreting the multi-version coding problem in terms of the underlying system and algorithm that it aims to model.
\\
\begin{remark} \begin{enumerate}
\item[]		
\item[A.] The system model has an implicit failure tolerance of $f$, as along as the quorum sizes $c_W, c_R$ are chosen such that $c_W, c_R \leq n-f$. This is because, for a version to be complete, a writer must contact $c_W$ severs, and for a reader, it must obtain responses from $c_R$ servers - choosing $c_W, c_R \leq n-f$ ensures that write and read operations complete provided that the number of failed servers is no larger than $f$ (see \cite{Lynch1996, ABD} for more details).
\item[B.] The parameter $\nu$ can be interpreted as a measure of the number of concurrent writes in the system \cite{multi_arxiv,Cadambe_Wang_Lynch2016,CadambeCoded_NCA, Dutta}. In distributed algorithms, the ordering among the various write operations is determined by carefully constructed protocols usually through the use of \emph{Lamport timestamps} (also known as logical timestamps) \cite{lamport1978time}. For instance, several protocols (e.g., \cite{ABD, Dutta, CadambeCoded_NCA}) involve a separate round of communication for a write to figure out the lamport clock (i.e., version number) before proceeding with dispersing the data. The mult-version coding problem abstracts out this these protocol details into the version number. However, it is worth noting that a ``later version'' is not necessarily arriving to the system after an earlier version - they can be concurrent, and can in fact arrive at different nodes at different orders (e.g., see \cite{multi_arxiv, Cadambe_Wang_Lynch2016}).  A later version may simply be viewed as one that could receive a higher Lamport timestamp in a protocol execution.
	\item[C.] Unlike the study of \cite{multi_arxiv} which considers $0$-error MVCs, we allow the probability of error to be at most $\epsilon$. See also Remark \ref{rem:zeroerror} for more details. 
		\end{enumerate}
\end{remark}
\color{black}
\subsection{Background - Replication and Simple Erasure Coding}
\label{sec:background}
Replication and simple MDS codes provide two natural MVC constructions. Suppose that the state of the $i$-th server is $\mathbf S(i)=\{s_1, s_2, \ldots, s_{|\mathbf S(i)|}\}$, where $s_1 < s_2< \ldots< s_{|\mathbf S(i)|}$. 
\begin{itemize}	
	\item \emph{Replication-based MVCs:} In this scheme, each server only stores the latest version it receives. The encoding function is $\varphi_{\mathbf S(i)}^{(i)}(\mathbf W_{\mathbf S(i)}) = \mathbf W_{s_{|\mathbf S(i)|}}$, hence the storage cost is $K$.
	
	\item \emph{Simple MDS codes based MVC (MDS-MVC):} In this scheme, an $(n, c)$ MDS code is used to encode each version separately. Specifically, suppose that $\mathcal{C}:[2^K]\rightarrow [2^{K/c}]^{n}$ is an $(n,c)$ MDS code over alphabet $[2^{K/c}],$ and denote the $i$-th co-ordinate of the output of $\mathcal{C}$ by $\mathcal{C}^{(i)}:[2^K]\rightarrow[2^{K/c}]$. The encoding function is constructed as $\varphi_{\mathbf S(i)}^{(i)}(\mathbf W_{\mathbf S(i)}) = (\mathcal{C}^{(i)}(\mathbf W_{s_1}), \mathcal{C}^{(i)}(\mathbf W_{s_2}),\ldots, \mathcal{C}^{(i)}(\mathbf W_{s_{|\mathbf S(i)|}}))$. That is, each server stores one codeword symbol for each version it receives and the storage cost is $\nu \frac{K}{c}$. 
\end{itemize}
An important outcome of the study of \cite{multi_arxiv} is that, when the different versions are independent, i.e., if $\delta_K = 1$, then the storage cost is at least $\frac{\nu K}{\nu+c-1}-\Theta(1)$. In particular, because $\frac{\nu}{\nu+c-1} \geq \frac{1}{2} \min(\frac{\nu}{c}, 1)$, the best possible MVC scheme is, for large $K$, at most twice as cost-efficient as the better among replication and simple erasure coding. In this paper, we show that replication and simple erasure coding are significantly inefficient if the different versions are correlated. Our schemes resemble simple erasure codes in their construction; however, we exploit the correlation between the versions to store fewer bits per server. 

\subsection{Summary of Results}
\label{Main Results}
\label{sec:motivation_results}
In order to explain the significance of our results, summarized in Table \ref{table:Schemes Comparison}, we begin with a simple motivating scheme. Consider the MDS-MVC scheme of Section \ref{sec:background}. Assume that we use a Reed-Solomon code over a field $\mathbb{F}_{p}$ of binary characteristic. The generator matrix of a Reed-Solomon code is usually expressed over $\mathbb{F}_{p}$. However, every element in $\mathbb{F}_{p}$ is a vector over $\mathbb{F}_{2}$, and a multiplication over the extension field $\mathbb{F}_{p}$ is a linear transformation over $\mathbb{F}_{2}.$ Therefore, the generator matrix of the Reed-Solomon code can be equivalently expressed over $\mathbb{F}_{2}$ as follows

$$ G = (G^{(1)}, G^{(2)}, \ldots, G^{(n)}),$$
where $G$ is a $K \times nK/c$ binary generator matrix, and $G^{(i)}$ has dimension $K \times K/c$. Because Reed-Solomon codes can tolerate $n-c$ erasures, every matrix of the form $(G^{(t_1)}, G^{(t_2)}, \ldots, G^{(t_c)}),$ where $t_1, t_2,\ldots, t_c$ are distinct elements of $[n]$, has a full rank of $K$ over $\mathbb{F}_2$. \\
We now describe a simple scheme that extends the MDS-MVC by exploiting the correlations and requires the knowledge of $\delta_{K}$. Suppose that the $i$-th server receives the set of versions $\mathbf S(i)=\lbrace s_1, s_2, \cdots, s_{|\mathbf S(i)|}\rbrace$, where $s_1 < s_2< \ldots< s_{|\mathbf S(i)|}$. The server encodes $\mathbf W_{s_{1}}$ using the binary code as $\mathbf W_{s_1} ^{\rm T}  G^{(i)}$. For $\mathbf W_{s_{m}}$, where $m>1$, the server finds a difference vector $\mathbf y^{(i)}_{s_m, s_{m-1}}$ that satisfies the following
\begin{enumerate}
	\item  $\mathbf y^{(i) \ \rm T}_{s_m, s_{m-1}} G^{(i)} = (\mathbf W_{s_{m}}- \mathbf W_{s_{m-1}})^{\rm T} G^{(i)}$ and 
\item $w_H(\mathbf y^{(i)}_{s_m, s_{m-1}}) \leq (s_m-s_{m-1}) \delta_K K.$   
\end{enumerate}
Although it is not necessary that $\mathbf y^{(i)}_{s_m, s_{m-1}} = \mathbf W_{s_{m}}-\mathbf W_{s_{m-1}}$, the fact that $\mathbf W_{s_{m}}-\mathbf W_{s_{m-1}}$ satisfies these  two conditions implies that the encoder can find at least one vector $\mathbf{y}^{(i)}_{s_m,s_{m-1}}$ satisfying these conditions. Since $w_H(\mathbf y^{(i)}_{s_m, s_{m-1}}) \leq (s_m-s_{m-1}) \delta_K K$, an encoder aware of $\delta_{K}$ can represent $\mathbf y^{(i)}_{s_m, s_{m-1}}$ by $\log Vol((s_m-s_{m-1})\delta_K K, K)$ bits. The first condition implies that a decoder that connects to the $i$-th server can obtain $\mathbf W_{s_m}^{\rm T}G^{(i)}$ by applying $\mathbf W_{s_1}^{\rm T}G^{(i)}+ \sum\limits_{\ell=2}^{m}\mathbf y^{(i) \ \rm T}_{s_\ell, s_{\ell-1}} G^{(i)}  =  \mathbf W_{s_m}^{\rm T}G^{(i)}$. Therefore, from any subset $\{t_1,t_2,\ldots,t_{c}\}$ of $c$ servers, for any common version $s_{m}$ among these servers, a decoder can recover $\mathbf W_{s_m}^{\rm T}G^{(t_{1})}, \mathbf W_{s_m}^{\rm T}G^{(t_{2})},\ldots,\mathbf W_{s_m}^{\rm T}G^{(t_{c})}$ from these servers and can therefore recover $\mathbf W_{s_{m}}$. 

The worst-case storage cost of this scheme is obtained when each server receives all the $\nu$ versions, which results in a storage cost of $\frac{K}{c}+ (\nu-1) \log  Vol(\delta_K K, K).$ Intuitively, the above scheme stores the first version using erasure coding - $K/c$ bits - and the remaining $(\nu-1)$ versions using delta coding, which adds a storage cost of $\log Vol(\delta_K K, K)$ bits per version.
\begin{table*}[t]
	\renewcommand{\arraystretch}{1.2}
	\centering
	\begin{tabular}{ | p{3.7 cm} | p{7 cm} | p{4cm} | }
		\hline
		\textbf{Scheme} & \textbf{Worst-case Storage Cost} & \textbf{Regime} \\ \hline
		Replication  &  $K$ & oblivious to $\delta_K$, $\epsilon=0$
		\\ \hline	
		Simple erasure codes &  $\nu \frac{K}{c}$& outperforms replication if $\nu<c$, oblivious to $\delta_K$, $\epsilon=0$  \\ \hline
		Theorem \ref{Reed-Solomon Update-Efficient MVC Theorem} [Reed-Solomon update-efficient code] & $\frac{K}{c}+ (\nu-1) \delta_K K (\log K + o(\log K))$& asymptotically outperforms the above schemes for $\nu<c$ and $\delta_K=o(1/\log K)$, oblivious to $\delta_K$, $\epsilon=0$ \\\hline
		Motivating scheme of Section \ref{sec:motivation_results} &$
		\frac{K}{c}+(\nu -1) \log Vol(\delta_K K, K)$& not oblivious to $\delta_K$, $\epsilon=0$\\ \hline
		Theorem \ref{Slepian-Wolf Storage Cost Theorem} [Random binning] &$
		\frac{K}{c}+\frac{\nu -1}{c} {\log Vol(\delta_K K, K)} + o(\log K)$& asymptotically outperforms the above schemes for $\nu < c$, not oblivious to $\delta_K$, $\epsilon=1/\log K$ \\ \hline
		Theorem \ref{Lower Bound Theorem} [Lower bound]
		& $\frac{K}{c+\nu-1}+\frac{\nu-1}{c+\nu-1} \log Vol (K, \delta_K K) - \Theta(1)$ 
		& applicable for all $\delta_K$
		\\ \hline 
	\end{tabular}
	\caption{Storage cost.}\label{table:Schemes Comparison}
\end{table*}
The scheme we described above motivates the following two questions.\\
\emph{\textbf{Q1}: Can we obtain a MVC construction that is oblivious to the parameter $\delta_K$ with a storage cost of $\frac{K}{c}+ (\nu-1) \log  Vol(\delta_K K, K)$?}\\
\emph{\textbf{Q2}: Can we use erasure coding to achieve a storage cost of $\frac{K}{c}+ \frac{\nu-1}{c} \log  Vol(\delta_K K, K)$?}

\noindent In Section \ref{Update-efficient Multi-version Codes}, we provide Theorem \ref{Reed-Solomon Update-Efficient MVC Theorem} that answers $Q1$ by developing a $0$-error Reed-Solomon based scheme that does not require the knowledge of $\delta_K$ and obtains the erasure coding gain of $1/c$ for the  first  version  available  at  a  server and  stores  the  subsequent  versions  via delta coding. 

In Section \ref{Slepian-Wolf Based Multi-version Codes}, we provide 
Theorem \ref{Slepian-Wolf Storage Cost Theorem} that gives a positive answer to $Q2$ by showing the existence of an $\epsilon$-error storage efficient scheme that obtains the erasure coding factor of $1/c$, not only for the first version, but also for the subsequent versions. Moreover, the scheme is able to harness the delta compression gain. 

Finally, in Section \ref{Lower Bound on the storage cost}, 
Theorem \ref{Lower Bound Theorem} provides a lower bound on the per-server storage cost which implies that for $\nu < c$, constant $\delta_K=\delta$ and $\epsilon=2^{-o(K)}$, the achievable scheme of Theorem \ref{Slepian-Wolf Storage Cost Theorem} is asymptotically at most twice the lower bound. We notice that the regime where $\nu<c$ is interesting as the degree of asynchrony is typically limited as pointed out in \cite{chen2017giza}. \color{black}



\section{Code Constructions (Theorem \ref{Reed-Solomon Update-Efficient MVC Theorem} and Theorem \ref{Slepian-Wolf Storage Cost Theorem}) }
\allowdisplaybreaks{
In this section, we provide our code constructions. We study the case where $\delta_K$ is not known and present a MVC based on Reed-Solomon code in Section \ref{Update-efficient Multi-version Codes}. Later on in this section, we study the case where $\delta_K$ is known and propose a random binning argument in Section \ref{Slepian-Wolf Based Multi-version Codes}. 
\label{Code Constructions}
\subsection{Update-efficient Multi-version Codes}
\label{Update-efficient Multi-version Codes}
 We develop simple multi-version coding scheme that exploits the correlation between the different versions and have smaller storage cost as compared with \cite{multi_arxiv}. In this scheme, the servers do not know the correlation degree $\delta_K$ in advance. We begin by recalling the definition of the update efficiency of a code from \cite{anthapadmanabhan2010update}. \begin{definition}[Update efficiency]
 	For a code $\mathcal {C}$ of length $N$ and dimension $K$ with encoder $\mathcal{C}:\ \mathbb{F}^{K} \rightarrow \  \mathbb{F}^{N},$ the update efficiency of the code is the maximum number of codeword symbols that must be updated when a single message symbol is changed and is expressed as follows
 	\begin{align}
 	t=\max_{\substack{\mathbf W, \mathbf W' \in \mathbb{F}^K:\\ d_H(\mathbf W, \mathbf W')=1}} d_H(\mathcal{C}(\mathbf W), \mathcal{C}(\mathbf W')).
 	\end{align}
 \end{definition}
   \noindent An $( N, K)$ code $\mathcal{C}$ is referred to as update-efficient code if it has an update efficiency of $o( N)$. 
 \begin{definition}[Update efficiency of a server]
 	Suppose that $\mathcal{C}^{(i)}:\mathbb{F}^{K} \rightarrow \  \mathbb{F}^{N/n}$ denotes the $i$-th co-ordinate of the output of $\mathcal{C}$ stored by the $i$-th server. The update efficiency of the $i$-th server is the maximum number of codeword symbols that must be updated in this server when a single message symbol is changed and is expressed as follows
 	\begin{align}
 	t^{(i)}=\max_{\substack{\mathbf W, \mathbf W' \in \mathbb{F}^K: \\ d_H(\mathbf W, \mathbf W')=1}} d_H (\mathcal{C}^{(i)}(\mathbf W), \mathcal{C}^{(i)}(\mathbf W')).
 	\end{align}
 \end{definition} \noindent 
 Suppose that $G=(G^{(1)}, G^{(2)}, \cdots, G^{(n)})$ is the generator matrix of a linear code $\mathcal{C}$, where $G^{(i)}$ is a $K \times N/n$ matrix that corresponds to the $i$-th server. The update efficiency of the $i$-th server is the maximum row weight of $G^{(i)}$. \color{black}
 \begin{definition}[The per-server maximum update efficiency]
 	The per-server maximum update efficiency is the maximum number of codeword symbols that must be updated in any server when a single message symbol is changed and is given by
 	\begin{equation}
 	t_s=\max_{\substack{i \in [n]}} \ t^{(i)}.
 	\end{equation}
 \end{definition}
We next present an update-efficient MVC construction, illustrated in Fig. \ref{Reed-Solomon}, that is based on Reed-Solomon code and has a maximum update efficiency per-server $t_s=1$. 
\begin{construction}[Reed-Solomon Update-Efficient MVC] 
	\label{update-efficient construction} Suppose that the $i$-th server receives the versions $\mathbf S(i)=\{s_1, s_2, \ldots, s_{|\mathbf S(i)|}\}$, where $s_1<s_2<\cdots<s_{|\mathbf S(i)|}$. A version $\mathbf W_{s_j}$ is divided into $\frac{K}{c \log n_p}$ blocks of length $ c \log n_p,$ where $n_p=2^{\lceil \log_{2}n\rceil}$. In each block, every consecutive string of $\log n_p$ bits is represented by a symbol in $\mathbb{F}_{n_p}$. The representation of  $\mathbf W_{s_j}$ over $\mathbb{F}_{n_p}$  is denoted by $\mathbf {\overline W}_{s_j}$. Each block is then encoded by an $(n, c)$ Reed-Solomon code with a generator matrix $\tilde G$ that is given by  
	\begin{align}
	\tilde G &=\left( 
	\begin{array}{c c c c}
	1 & 1 & \cdots  & 1 \\
	\lambda_1  & \lambda_2  & \cdots &   \lambda_n \\
	\lambda_1^2  & \lambda_2^2  & \cdots &   \lambda_n^2 \\
	\vdots & \vdots & \vdots & \vdots \\
	\lambda_1^{c-1}&  \lambda_2^{c-1} & \cdots & \lambda_n^{c-1} 
	\end{array}
	\right),
	\end{align}
	where $\Lambda=\left\lbrace  \lambda_1, \lambda_2, \cdots, \lambda_n \right\rbrace  \subset \mathbb{F}_{n_p}$ is a set of distinct elements. 
	For $\mathbf W_{s_1}$, the $i$-th server stores $\mathbf {\overline W}_{s_1}^{\rm T} G^{(i)}$, where $G^{(i)}$ is a $ \frac{K}{\log n_p} \times \frac{K}{c \log n_p}$ matrix that is given by
	\begin{align}
	G ^{(i)}&=\left( 
	\begin{array}{c c c c c}
	\tilde G \mathbf e_i & 0 & \cdots &0 & 0 \\
	0  & \tilde G \mathbf e_i   & \cdots &  0 &  0 \\
	\vdots & \vdots & \vdots & \vdots & \vdots\\
	0 & 0 & \cdots &0 & \tilde G  \mathbf e_i 
	\end{array}
	\right),
	\end{align} 
	where $\mathbf{e}_{i}$ is $i$-th standard basis vector over $\mathbb F_{n_p}$. For $\mathbf{W}_{s_m}$, where $m>1$, the server may only store the updated symbols from the old version $\mathbf{\overline{W}}_{s_{m-1}}$ or store  $\mathbf {\overline W}_{s_m}^{\rm T} G^{(i)}$. 
\end{construction}
\begin{figure}[h]
	\centering
	\includegraphics[width=.7\textwidth,height=.22\textheight]{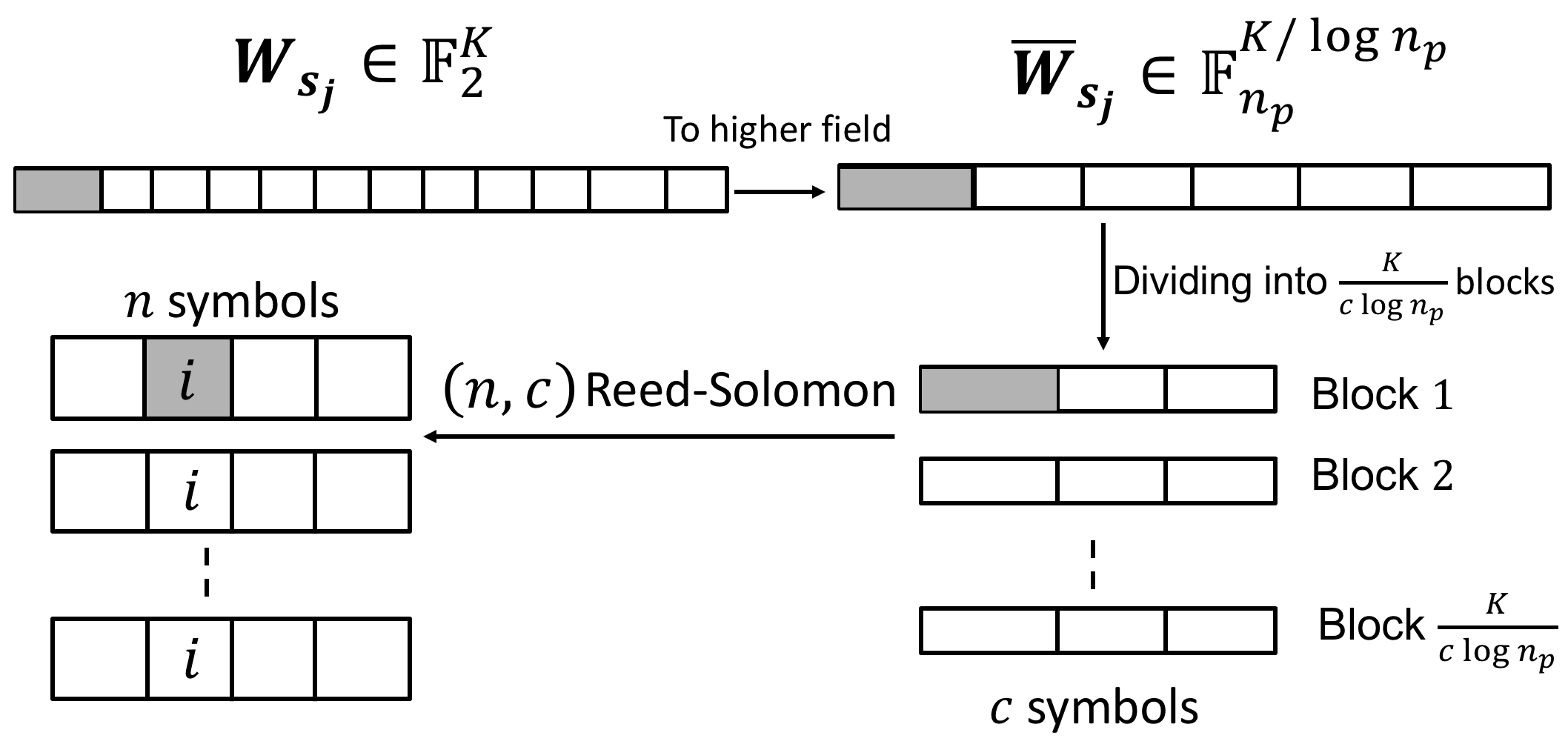}
	\caption{Illustration of Construction \ref{update-efficient construction}. A bit that changes in the message leads at most to one updated codeword symbol in the $i$-th server, $\forall i \in [n]$, hence $t_s=1$.  \label{Reed-Solomon} \color{black}}
\end{figure}
\begin{theorem}
	\label{Reed-Solomon Update-Efficient MVC Theorem}[Reed-Solomon Update-Efficient MVC]
	Construction \ref{update-efficient construction} is a $0$-error $(n, c, \nu, 2^K, q, \delta_K)$ multi-version code with a worst-case storage that is at most 
	\begin{equation}
	\frac{K}{c}+   (\nu-1) \min(\delta_K K \log \left( \frac{Kn_p}{c\log n_p}\right), K/c),
	\end{equation}
	where $n_p=2^{\lceil \log_{2}n\rceil}$.
\end{theorem}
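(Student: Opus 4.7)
The plan is to establish zero-error decoding first, and then bound the worst-case per-server storage.

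For decoding, I would argue that every server $i$ can reconstruct $\mathbf{\overline{W}}_{s}^{\rm T} G^{(i)}$ for each received version $s \in \mathbf{S}(i)$ by starting from the stored base encoding of $\mathbf{W}_{s_1}$ and successively applying either the full replacement encoding or the stored list of updated symbol positions/values until reaching $s$. Given a decoding query on $c$ servers $T=\{t_1,\ldots,t_c\}$ with latest common version $s^* = \max \cap_{i\in T}\mathbf{S}(i)$, each queried server can therefore supply $\mathbf{\overline{W}}_{s^*}^{\rm T} G^{(t_j)}$ on each of the $K/(c\log n_p)$ blocks. Since $\tilde{G}$ is the generator matrix of an $(n,c)$ Reed-Solomon code over $\mathbb{F}_{n_p}$, the $c\times c$ submatrix $(\tilde{G}\mathbf{e}_{t_1},\ldots,\tilde{G}\mathbf{e}_{t_c})$ is invertible, giving unique block-by-block recovery of $\mathbf{\overline{W}}_{s^*}$ and hence $\mathbf{W}_{s^*}$. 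This yields zero-error decoding for every state and every query set.

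For the per-version storage, the first version occupies $K/(c\log n_p)$ symbols in $\mathbb{F}_{n_p}$, i.e., $K/c$ bits. For each subsequent version I would exploit the block-diagonal structure of $G^{(i)}$: every row has exactly one nonzero entry, so the per-server update efficiency is $t_s=1$. Equivalently, a single changed $\mathbb{F}_{n_p}$-symbol (and hence a single changed bit) in the message alters exactly one of the $K/(c\log n_p)$ codeword symbols stored by server $i$. Because $d_H(\mathbf{W}_{s_{m-1}},\mathbf{W}_{s_m})\le (s_m-s_{m-1})\delta_K K$ under the Markov correlation model, at most $(s_m-s_{m-1})\delta_K K$ codeword symbols differ between the two stored encodings. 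Describing the changed positions and their new values costs $(s_m-s_{m-1})\delta_K K\bigl(\log(K/(c\log n_p))+\log n_p\bigr) = (s_m-s_{m-1})\delta_K K\log(K n_p/(c\log n_p))$ bits, while the fallback full encoding costs $K/c$ bits; the server stores the cheaper option.

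The remaining step is aggregating the update costs and showing the all-versions state is the worst case. Summing for a server with state $\{s_1,\ldots,s_k\}$ gives $K/c + \sum_{m=2}^{k}\min\bigl((s_m-s_{m-1})\delta_K K\log(Kn_p/(c\log n_p)),\,K/c\bigr)$. A short case split on whether $\delta_K K\log(Kn_p/(c\log n_p))$ is below or above $K/c$ handles this cleanly: in the first regime I bound each $\min$-term by $(s_m-s_{m-1})\delta_K K\log(\cdot)$ and telescope using $\sum(s_m-s_{m-1})=s_k-s_1\le\nu-1$; in the second regime each of the $k-1\le\nu-1$ terms saturates at $K/c$. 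Both cases yield $\sum_{m\ge 2}\min(\cdot)\le(\nu-1)\min(\delta_K K\log(Kn_p/(c\log n_p)),\,K/c)$, attained when the server receives all $\nu$ consecutive versions. Adding the $K/c$ first-version contribution produces the claimed bound. I expect the only mild obstacle to be organizing this min/telescoping aggregation cleanly; the conceptual core -- MDS block decoding combined with the $t_s=1$ structural observation -- follows directly from the construction.
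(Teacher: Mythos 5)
Your proposal is correct and follows essentially the same route as the paper: MDS invertibility of the Reed-Solomon blocks for zero-error recovery of the latest common version, the $t_s=1$ update-efficiency observation giving a per-updated-symbol cost of $\log\left(\frac{Kn_p}{c\log n_p}\right)$ bits with the $K/c$ full-recoding fallback, and aggregation over at most $\nu-1$ subsequent versions. Your explicit telescoping over gaps $s_m-s_{m-1}$ for arbitrary server states is a slightly more careful justification of the paper's assertion that the all-versions state is worst case, but it is the same argument in substance.
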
 
\begin{proof} [Proof of Theorem \ref{Reed-Solomon Update-Efficient MVC Theorem}]
We observe that Construction \ref{update-efficient construction} is a valid multi-version code as the latest common version is recoverable from any $c$ servers by the MDS property of Reed-Solomon code. In order to characterize the worst-case storage cost, we observe that the update efficiency of the $i$-th server is equal to the maximum row weight of $G^{(i)}$ which is equal to $1, \forall i \in [n]$. Thus, the per-server maximum update efficiency $t_s$ is also equal to $1$. \\ The worst-case storage cost corresponds to the case where a server receives all versions. In this case, the server stores $\mathbf {\overline W}_{1}^{\rm T} G^{(i)}$ for the first version. For $\mathbf{W}_{u}$, where $u>2$, the server may only store the updated symbols from the old version $\mathbf{\overline{W}}_{{u-1}}$ or store $\mathbf {\overline W}_{u}^{\rm T} G^{(i)}$. Storing the index of an updated symbol requires $\log (\frac{K}{c\log n_p})$ bits and storing the value requires $\log n_p$. Therefore, the per-server storage cost is upper-bounded as follows
	\begin{align*}
	\alpha &\leq \frac{K}{c}+ \sum_{u=2}^{\nu} \min(d_H(\mathbf W_{u}, \mathbf W_{u-1}) \log \left( \frac{Kn_p}{c\log n_p}\right), K/c) \\
	& \leq  \frac{K}{c}+ \sum_{u=2}^{\nu} \min(d_H(\mathbf W_{u}, \mathbf W_{u-1}) \log \left( \frac{Kn_p}{c\log n_p}\right), K/c)	\\
	& \leq \frac{K}{c}+   (\nu-1) \min(\delta_K K \log \left( \frac{Kn_p}{c\log n_p}\right), K/c).
	\end{align*}
\end{proof}
\color{black}
\subsection{Random Binning Based Multi-version Codes}
\label{Slepian-Wolf Based Multi-version Codes}
We next introduce a random binning argument showing the existence of a multi-version code that can harness both of the erasure and the delta coding gains for all versions for the case where $\delta_K$ is known. Recall that Slepian-Wolf coding \cite{slepian1973noiseless,wolf1973data} is a distributed data compression technique for correlated sources that are drawn in independent and identical manner according to a given distribution. In the Slepian-Wolf setting, the decoder is interested in decoding the data of all sources. In the multi-version coding problem, the decoder is interested in decoding the latest common version, or a later version, among any set of $c$ servers.

We notice that our model differs from the Slepian-Wolf setting as we do not aspire to decode all the versions.
The lossless source coding problem with a helper \cite{ahlswede1975source, wyner1975source, cover1975proof} may also seem related to our approach, since the side information of the older versions may be interpreted as helpers. In the optimal strategy for the helper setting, the helper side information is encoded via a joint typicality encoding scheme, whereas the random binning is used for the message. However, in the multi-version coding setting, a version that may be a side information for one state may be required to be decoded in another state. For this reason, a random binning scheme for all versions leads to schemes with a near-optimal storage cost.  We next present the code construction that is inspired by Cover's random binning proof of the Slepian-Wolf problem \cite{cover1975proof}. 
\begin{construction}[Random binning multi-version code] 
\label{Random binning multi-version code}	
	Suppose that the $i$-th server receives the versions $\mathbf S(i)=\{s_1, s_2, \cdots, s_{|\mathbf S(i)|}\} \subseteq [\nu]$, where $s_1<s_2<\cdots<s_{|\mathbf S(i)|}$.
	\begin{itemize}
		\item  \emph{Random code generation:} At the $i$-th server, for a version $s_j$ the encoder assigns an index at random from $\lbrace 1, 2,  \cdots, 2^{R_{s_j}^{(i)}/c} \rbrace$ uniformly and independently to each vector of length $K$ bits, where  $R_{s_j}^{(i)}/c$ is the rate assigned by the $i$-th server to version $s_j$. 
		\item \emph{Encoding}:\label{Slepian-Wolf Encoder} The server stores the corresponding index to each version that it receives and the decoder is also aware of this mapping. The encoding function of the $i$-th server is given by
		\begin{align}
		\varphi_{\mathbf S(i)}^{(i)}=( \varphi_{ s_1 }^{(i)},  \varphi_{ s_2 }^{(i)}, \cdots,  \varphi_{s_{|\mathbf S(i)|}}^{(i)} ),
		\end{align}
		where $\varphi_{ s_j }^{(i)} \colon [2^K]  \to \lbrace 1, 2 \cdots, 2^{K R_{s_j}^{(i)}/c} \rbrace$ 
		and we choose the rates as follows
		\begin{align}
		K R_{s_1}^{(i)}&=K+(s_1-1)\log Vol(\delta_K K, K)+ (s_1-1)-\log \epsilon 2^{- \nu n},\label{eq:SWrate1}\\
		KR_{s_j}^{(i)}&=(s_j-s_{j-1}) \log Vol(\delta_K K, K)+(s_j-1)-\log \epsilon 2^{-\nu n},   j \in \lbrace 2, 3, \cdots, |\mathbf S(i)|\rbrace. \label{eq:SWrate2}
		\end{align}
\end{itemize}	
		Consider a state $\mathbf{S} \in \mathcal{P}([\nu])^n$ and suppose that the decoder connects to the servers $T=\{t_1, t_2, \cdots, t_c\} \subseteq [n]$. Suppose that a version $s_j$ is received by a set of servers $\{i_1, i_2, \cdots, i_r \} \subseteq T$, then the bin index corresponding to this version is given by 
		\begin{align}
		\varphi_{ s_j }= (\varphi_{ s_j }^{(i_1)}, \varphi_{ s_j }^{(i_2)}, \cdots, \varphi_{ s_j }^{(i_r)}).
		\end{align} 
		In this case, the rate of version $s_j$ is given by
		\begin{align}
		\label{sum_rate_all_servers}
		R_{s_j}=\frac{1}{c}\sum_{i \in \{i_1, i_2, \cdots, i_r \} } R_{s_j}^{(i)}.
		\end{align}
\item \emph{Decoding}:\label{Slepian-Wolf Decoder} 
The decoder employs the \emph{possible set decoding} strategy as follows. Assume that $\mathbf W_{u_L}$ is the latest common version in $\mathbf{S}$ and that the versions $\mathbf W_{u_1}, \mathbf W_{u_2}, \cdots, \mathbf W_{u_{L-1}}$ are the older versions such that each of them is received by at least one server out of those $c$ servers. We denote this set of versions by $\mathbf S_{\rm T}$ and define it formally as follows 
\begin{align}
\label{set of versions}
\mathbf S_{\rm T} =\{u_1, u_2, \cdots, u_L\}=\left( \bigcup_{t \in T} \mathbf{S}(t)\right)  \setminus \{u_L+1, u_L+2, \cdots, \nu \},
\end{align}
where $u_1<u_2< \cdots<u_L$. Given the bin indices $(b_{u_1}, b_{u_2}, \cdots, b_{u_L})$, the decoder finds all tuples $(\mathbf w_{u_1}, \mathbf w_{u_2}, \cdots, \mathbf w_{u_L}) \in A_{\delta_K}$ such that $(\varphi_{u_1} (\mathbf w_{u_1})=b_{u_1}, \varphi_{u_2} (\mathbf w_{u_2})=b_{u_2}, \cdots, \varphi_{u_L} (\mathbf w_{u_L})=b_{u_L})$. If all of these tuples have the same latest common version $\mathbf w_{u_L}$, the decoder declares $\mathbf w_{u_L}$ to be the estimate of the latest common version $\mathbf{\hat W_{u_L}}$. Otherwise, it declares an error. 		
\end{construction}
\begin{theorem}
	\label{Slepian-Wolf Storage Cost Theorem}[Random Binning MVC]
	There exists an $\epsilon$-error $(n, c, \nu, 2^K, q, \delta_K)$ multi-version code whose worst-case storage cost is at most 
	\begin{align}
	\frac{K}{c} + \frac{(\nu -1) \log Vol(\delta_K K, K)}{c}+\frac{\nu(\nu-1)/2- \nu \log \epsilon 2^{-\nu n}}{c}.
	\end{align}
\end{theorem}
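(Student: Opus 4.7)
The plan is to prove Theorem~\ref{Slepian-Wolf Storage Cost Theorem} by analyzing Construction~\ref{Random binning multi-version code} in two steps: a direct storage-cost computation, followed by a Slepian--Wolf-style random coding bound on the decoding error; the existence of a deterministic code with the claimed guarantees then follows from the probabilistic method. For the storage cost, I would note from \eqref{eq:SWrate1}--\eqref{eq:SWrate2} that the rate $KR^{(i)}_{s_j}$ is monotone in the gap $s_j-s_{j-1}$, so the worst case is $\mathbf S(i)=[\nu]$ (i.e., $s_j=j$ for all $j$). Summing $KR^{(i)}_j/c$ from $j=1$ to $\nu$, the contributions arrange themselves as $K/c$, $(\nu-1)\log Vol(\delta_K K,K)/c$, the arithmetic series $\sum_{j=2}^\nu (j-1)/c=\nu(\nu-1)/(2c)$, and $-\nu\log(\epsilon\,2^{-\nu n})/c$, reproducing the theorem's expression verbatim.

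For the error analysis, I fix an arbitrary $(\mathbf S,T)$, set $\mathbf S_T=\{u_1,\ldots,u_L\}$ as in \eqref{set of versions}, and observe that the possible-set decoder fails only when some competitor $\tilde{\mathbf w}\in A_{\delta_K}$ projected to $\mathbf S_T$ has the same bin indices as the true tuple $\mathbf w$ yet satisfies $\tilde{\mathbf w}_{u_L}\neq \mathbf w_{u_L}$. Indexing by the subset $F\subseteq \mathbf S_T$ of coordinates (with $u_L\in F$) on which the competitor differs from $\mathbf w$, and using the independence of the random bin assignments across servers and versions, a union bound gives
\[
\Pr[E_F]\ \le\ \bigl|A_{\delta_K}(\mathbf W_F\mid \mathbf w_{F^c})\bigr|\cdot \prod_{\ell\in F} 2^{-KR_{u_\ell}},
\]
where $KR_{u_\ell}=\tfrac{1}{c}\sum_{i\in T\cap\mathcal A_\mathbf S(u_\ell)}KR^{(i)}_{u_\ell}$ is the decoder's aggregate bin-budget for version $u_\ell$ from \eqref{sum_rate_all_servers}. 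The core combinatorial step is to bound $|A_{\delta_K}(\mathbf W_F\mid \mathbf w_{F^c})|$ by a product of Hamming-ball volumes $\prod_{\ell\in F}Vol((u_\ell-u_{\ell-1})\delta_K K,K)$, obtained by processing $F$ in increasing order and iterating the triangle inequality through the Markov chain that defines $A_{\delta_K}$ (with the convention that the factor is $2^K$ if $u_1\in F$ and no $F^c$-anchor precedes it).

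Combining these ingredients with the rates \eqref{eq:SWrate1}--\eqref{eq:SWrate2}, and using that every server $i\in T\cap\mathcal A_\mathbf S(u_\ell)$ has its local predecessor $s^{(i)}_{j-1}\le u_{\ell-1}$ so that each single-server rate already dominates $\log Vol((u_\ell-u_{\ell-1})\delta_K K,K)$ by the concavity bound $Vol(k\delta K,K)\le Vol(\delta K,K)^k$ together with the slack $(u_\ell-1)-\log(\epsilon\,2^{-\nu n})$, I expect to obtain $\Pr[E_F]\le \epsilon\,2^{-\nu n}\,2^{-|F|}$. A final union bound over the at most $2^L\le 2^\nu$ subsets $F$, the at most $2^{\nu n}$ states $\mathbf S$, and the at most $2^n$ reader sets $T$ yields total expected error at most $\epsilon$, so the probabilistic method produces a deterministic binning meeting the stated trade-off. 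The main obstacle I foresee is the rate bookkeeping: because $R^{(i)}_{u_\ell}$ depends on server $i$'s own predecessor rather than a global quantity, one must verify the rate-versus-count inequality uniformly over every state, reader set, and error pattern; the additive pieces $(s_j-1)$ and $-\log(\epsilon\,2^{-\nu n})$ built into \eqref{eq:SWrate1}--\eqref{eq:SWrate2} are precisely the insurance that makes this uniform domination go through.
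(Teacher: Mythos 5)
Your storage-cost calculation is correct and matches the theorem, but your error analysis contains a genuine gap. You bound $\Pr[E_F]\le \bigl|A_{\delta_K}(\mathbf W_F\mid \mathbf w_{F^c})\bigr|\prod_{\ell\in F}2^{-KR_{u_\ell}}$ for \emph{every} pattern $F\ni u_L$ and then argue the rates dominate the counting term because each \emph{single-server} rate $KR^{(i)}_{u_\ell}$ exceeds $\log Vol((u_\ell-u_{\ell-1})\delta_K K,K)$. That per-server fact is true, but the decoder's aggregate rate in \eqref{sum_rate_all_servers} is $KR_{u_\ell}=\tfrac1c\sum_{i\in T\cap\mathcal A_{\mathbf S}(u_\ell)}KR^{(i)}_{u_\ell}$, and an \emph{older} version need only be held by one of the $c$ contacted servers, in which case $KR_{u_\ell}\approx\tfrac1c\log Vol(\cdot)$, far below what you charge against it. Concretely, take $c=2$, $\nu=5$, $T=\{1,2\}$, $\mathbf S(1)=\{1,2,3,4,5\}$, $\mathbf S(2)=\{4,5\}$, and $F=\{2,5\}$: your inequality needs $K(R_2+R_5)\ge 2\log Vol(\delta_K K,K)+\mathrm{slack}$, whereas \eqref{eq:SWrate1}--\eqref{eq:SWrate2} give only $\tfrac12\log Vol+\log Vol=\tfrac32\log Vol$ plus additive terms independent of $\log Vol$. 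For constant $\delta_K$ this leaves a deficit of order $K$, so the uniform domination $\Pr[E_F]\le\epsilon 2^{-\nu n}2^{-|F|}$ you hope for is false for non-suffix patterns, and your union bound diverges; the obstacle you flagged as "bookkeeping" is in fact fatal to this route.

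The paper's proof sidesteps exactly this: rather than retaining all bin constraints of $\mathcal I$, it enlarges $E_{\mathcal I}$ to an event (the $\tilde E_{u_{L-1}}$, $\tilde E_{u_{L-2}},\dots$ of the recursive case analysis) that keeps only the constraints on the maximal suffix run of $\mathcal I$ ending at $u_L$ and anchors the competitor count at the true value of the latest version \emph{outside} $\mathcal I$. Discarding both the bin constraints and the enumeration over the low-rate older versions means only the suffix-sum conditions \eqref{Rate Region} are needed, and those are satisfied by \eqref{eq:SWrate1}--\eqref{eq:SWrate2} (every contacted server holds $u_L$, and each server's local gaps telescope across the suffix interval). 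In the example above this means bounding $E_{\{2,5\}}$ by the event anchored at $\mathbf W_4$ involving only $R_5$, which is harmless. To repair your argument you must replace the all-subsets bookkeeping with this suffix relaxation (or prove the suffix conditions directly); as written, the key step does not go through. A minor further point: your final union over the roughly $2^n$ reader sets $T$ is not covered by the $-\log\epsilon 2^{-\nu n}$ slack you budgeted, though that is easily absorbed by adjusting constants.
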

\begin{proof}[Proof of Theorem \ref{Slepian-Wolf Storage Cost Theorem}]
We show that Construction \ref{Random binning multi-version code} is an $\epsilon$-error multi-version code. \\ We denote the error event by $E$ and express it as follows 
    \begin{align}
    \label{binning error event}
      E = \{ \exists (\mathbf{w}'_{u_1}, \mathbf{w}'_{u_2}, \ldots,\mathbf{w}_{u_L}') \in  A_{\delta_K}: \mathbf{w}_{u_L}' \neq \mathbf W_{u_L} \ \textrm{and} ~~\varphi_u(\mathbf w'_{u})=\varphi_u(\mathbf W_{u}), \forall u \in \mathbf S_T\}.
	\end{align}
	The error event in decoding can be equivalently expressed as follows 
	\begin{align}
	E= \bigcup_{ \mathcal I \subseteq \mathbf {S}_T: u_L \in \mathcal{I}} E_{ \mathcal I}
	\end{align}
	where
	\begin{align}
	\label{Error Events}
	E_{\mathcal I} = &\{\exists \mathbf w'_u \neq \mathbf W_u, \forall u \in {\mathcal I}: \varphi_u(\mathbf w'_{u} )= \varphi_u(\mathbf W_{u}), \forall u \in {\mathcal I}  \ \text{and} \ (\mathbf w'_{\mathcal I}, \mathbf W_{\mathbf S_T \setminus \mathcal I}) \in A_{\delta_K}\}, 	
	\end{align}
	for $ \mathcal{I} \subseteq \mathbf S_T$ such that $u_L \in \mathcal I$. By the union bound, we have
	\begin{equation}
	\begin{aligned}
	P_e (\mathbf S, T)& \coloneqq P(E)  =P \left( \bigcup_{ \mathcal I \subseteq \mathbf {S}_T: u_L \in \mathcal I} E_{ \mathcal I}\right) 
	\leq \sum_{ \mathcal I \subseteq {\mathbf{S}_T: u_L \in \mathcal I}} P(E_{ \mathcal I}),
	\end{aligned} 
	\end{equation}
	and we require that $P_e (\mathbf S, T) \leq \epsilon 2^{-\nu n}$. Thus, for every $\mathcal I \subseteq \mathbf S_T$ such that $u_L \in \mathcal I$, it suffices to show that $P(E_\mathcal I) \leq \epsilon 2^{-(L-1)} 2^{-\nu n}$. \\
\begin{figure}[h]
	\centering
	\includegraphics[width=.65\textwidth,height=.25\textheight]{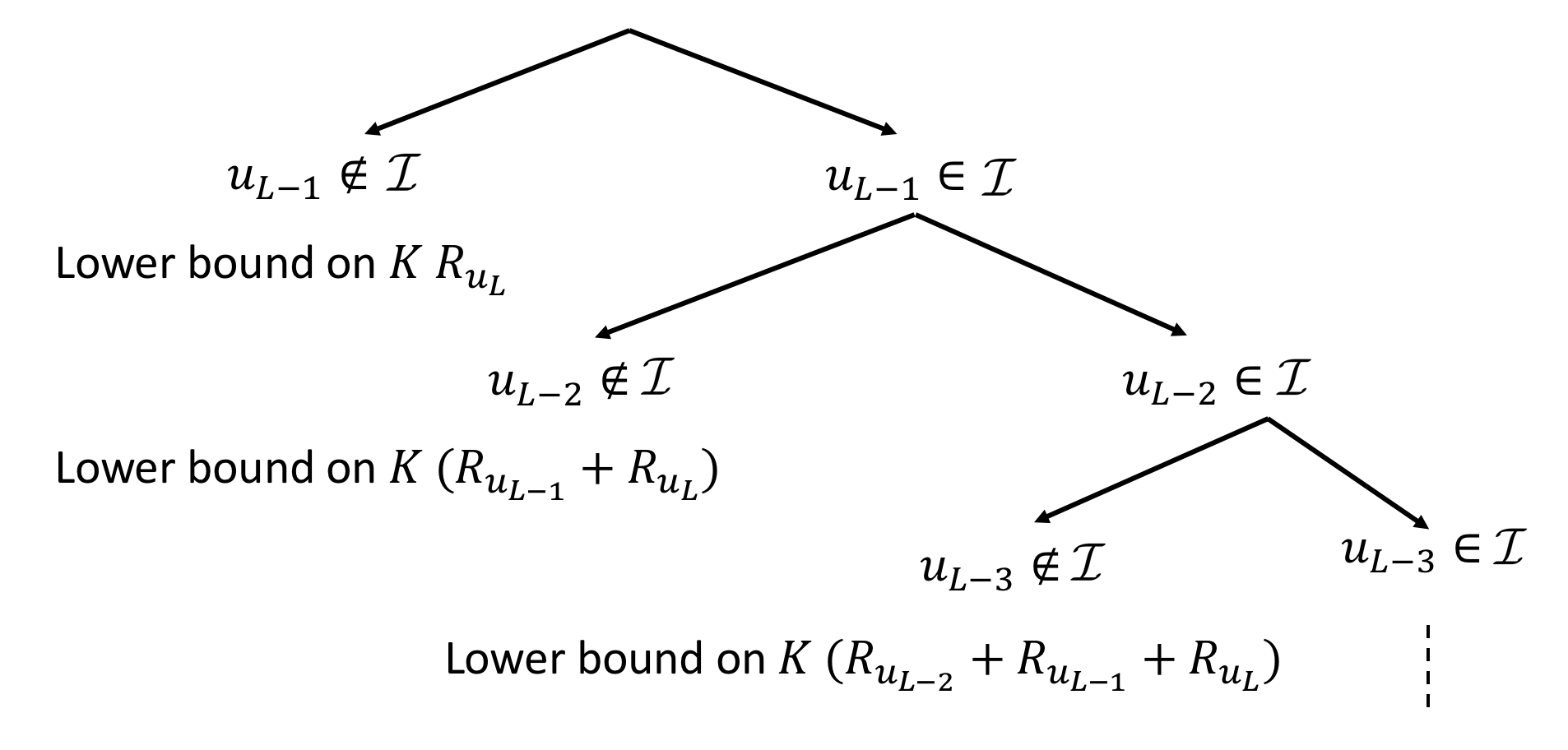}
	\caption{Illustration of the error analysis of Construction \ref{Random binning multi-version code}. \color{black} \label{Recursive Argument}}
    \end{figure}	
	\noindent We now proceed in a case by case manner as shown in Fig. \ref{Recursive Argument}. We first consider the case where $u_{L-1} \notin \mathcal I$, later we consider the case where $u_{L-1} \in \mathcal{I}$. For the case where $u_{L-1} \notin \mathcal I$, we have
	\begin{align}
	E_{ \mathcal I} \subset \tilde E_{u_{L-1}} &\coloneqq \{\exists \mathbf w_{u_L}' \neq \mathbf W_{u_L}: \varphi_{u_L}(\mathbf w_{u_L}')=\varphi_{u_L}(\mathbf W_{u_L}) \ \text{and} \ (\mathbf W_{u_{L-1}}, \mathbf w_{u_L}') \in A_{\delta_K} \}. 
	\end{align}
	\noindent Consequently, we have
	\begin{align}
	P(E_{ \mathcal I}) & \leq P(\tilde E_{u_{L-1}}) 
= \sum_{(\mathbf w_{u_{L-1}}, \mathbf w_{u_L})} p(\mathbf w_{u_{L-1}}, \mathbf w_{u_L}) \notag \\ &P( \exists \mathbf w_{u_L}' \neq \mathbf w_{u_L}: \varphi_{u_L}(\mathbf w_{u_L}')=\varphi_{u_L}(\mathbf w_{u_L}) \ \text{and} \ (\mathbf w_{u_{L-1}}, \mathbf w_{u_L}') \in A_{\delta_K}) \notag \\
	& \stackrel{(a)}{\leq} \sum_{(\mathbf w_{u_{L-1}}, \mathbf w_{u_L})} p(\mathbf w_{u_{L-1}}, \mathbf w_{u_L})\sum_{\substack{\mathbf w_{u_L}' \neq \mathbf w_{u_L}\\ (\mathbf w_{u_{L-1}}, \mathbf w_{u_L}') \in A_{\delta_K}}} P(\varphi_{u_L}(\mathbf w_{u_L}')=\varphi_{u_L}(\mathbf w_{u_L})) \notag \\
	&  \stackrel{(b)}{=} \sum_{(\mathbf w_{u_{L-1}}, \mathbf w_{u_L})} p(\mathbf w_{u_{L-1}}, \mathbf w_{u_L})\sum_{\substack{\mathbf w_{u_L}' \neq \mathbf w_{u_L}\\ (\mathbf w_{u_{L-1}}, \mathbf w_{u_L}') \in A_{\delta_K}}} \prod_{i=1}^{c} P(\varphi_{u_L}^{(t_i)}(\mathbf w_{u_L}')=\varphi_{u_L}^{(t_i)}(\mathbf w_{u_L})) \notag \\
	&\leq \sum_{(\mathbf w_{u_{L-1}}, \mathbf w_{u_L})} p(\mathbf w_{u_{L-1}}, \mathbf w_{u_L})  |A_{\delta_K} (\mathbf W_{u_L}|\mathbf w_{u_{L-1}})| \ \prod_{i=1}^{c} 2^{-K R_{u_L}^{(t_i)}/c }
	 \notag \\
	&\stackrel{(c)}{=} \sum_{(\mathbf w_{u_{L-1}}, \mathbf w_{u_L})} p(\mathbf w_{u_{L-1}}, \mathbf w_{u_L})  |A_{\delta_K} (\mathbf W_{u_L}|\mathbf w_{u_{L-1}})| \ 2^{-K R_{u_L} }  \notag \\ 
	&= \sum_{(\mathbf w_{u_{L-1}}, \mathbf w_{u_L})} p(\mathbf w_{u_{L-1}}, \mathbf w_{u_L})  2^{-(KR_{u_L} -\log Vol ((u_L-u_{L-1}) \delta_K K, K))} \notag \\
	& = 2^{-(KR_{u_L} -\log Vol ((u_L-u_{L-1}) \delta_K K, K))},
	\end{align} 
	where $(a)$ follows by the union bound, $(b)$ follows since each server assigns an index independently from the other servers and $(c)$ follows from (\ref{sum_rate_all_servers}). Choosing $R_{u_L}$ to satisfy 
	\begin{align*}
	KR_{u_L} \geq \log Vol ((u_L-u_{L-1}) \delta_K K, K)+(L-1)- \log \epsilon 2^{-\nu n}
	\end{align*}
	ensures that $P(E_\mathcal I) \leq \epsilon 2^{-(L-1)} 2^{-\nu n}$.

	Now, we consider the case where $u_{L-1} \in \mathcal I$. In this case, we consider the following two cases.
	First, we consider the case where $u_{L-2} \notin \mathcal I $, later we consider the case where $u_{L-2} \in \mathcal{I}$. For the case where $u_{L-2} \notin \mathcal I,$ we have 
	\begin{align}
	E_\mathcal I \subseteq \tilde E_{u_{L-2}} &\coloneqq \{\exists \mathbf w_{u_{L-1}}' \neq \mathbf W_{u_{L-1}}, \mathbf  w_{u_L}' \neq \mathbf W_{u_L}:  \varphi_{u_{L-1}}(\mathbf  w_{u_{L-1}}')=\varphi_{u_{L-1}}(\mathbf  W_{u_{L-1}}), \notag \\ &\varphi_{u_L}(\mathbf  w_{u_L}')=\varphi_{u_L}(\mathbf  W_{u_L}) \ \text{and} \ (\mathbf  W_{u_{L-2}}, \mathbf w_{u_{L-1}}', \mathbf w_{u_L}') \in A_{\delta_K} \}. 
	\end{align}
	Therefore, we have 
	\begin{align}
	 P(E_\mathcal I) &< P(\tilde E_{u_{L-2}}) \leq  \sum_{(\mathbf  w_{u_{L-2}}, \mathbf w_{u_{L-1}}, \mathbf w_{u_L})} p(\mathbf  w_{u_{L-2}}, \mathbf w_{u_{L-1}}, \mathbf w_{u_L}) \notag \\ &\sum_{\substack{\mathbf w_{u_{L-1}}' \neq \mathbf w_{u_{L-1}}\\ \mathbf w_{u_{L}}' \neq \mathbf w_{u_{L}} \\ (\mathbf w_{u_{L-2}}, \mathbf w_{u_{L-1}'}, \mathbf w_{u_{L}}') \in A_{\delta_K}}} P(\varphi(\mathbf  w_{u_{L-1}}')=\varphi(\mathbf  w_{u_{L-1}})) P(\varphi_{u_L}(\mathbf  w_{u_L}')=\varphi(\mathbf  w_{u_L})) \notag \\
	&\leq \sum_{(\mathbf  w_{u_{L-2}}, \mathbf w_{u_{L-1}}, \mathbf w_{u_L})} 
	p(\mathbf  w_{u_{L-2}}, \mathbf w_{u_{L-1}}, \mathbf w_{u_L})  2^{-K(R_{u_{L-1}}+R_{u_L})}  |A_{\delta_K} (\mathbf W_{u_{L-1}}, \mathbf W_{u_{L}}|\mathbf w_{u_{L-2}})| \notag \\
	&= \sum_{(\mathbf  w_{u_{L-2}}, \mathbf w_{u_{L-1}}, \mathbf w_{u_L})} 
	p(\mathbf  w_{u_{L-2}}, \mathbf w_{u_{L-1}}, \mathbf w_{u_L})  2^{-K(R_{u_{L-1}}+R_{u_L})} \notag \\ & \ \ \ \ \ 2^{\log Vol ((u_L-u_{L-1}) \delta_K K, K)+\log Vol ( (u_{L-1}-u_{L-2}) \delta_K K, K)}.
	\end{align} 
	In this case, we choose the rates as follows
	\begin{align*}
	K(R_{u_{L-1}}+R_{u_L})  \geq \sum_{j=L-1}^{L} \log Vol ( (u_{j}-u_{j-1}) \delta_K K, K)+(L-1)-\log \epsilon 2^{-\nu n}.
	\end{align*}	
	We next consider the other case where $u_{L-2} \in \mathcal I$. In this case, we also have two cases based on whether $u_{L-3}$ is in $\mathcal I$ or not. By applying the above argument repeatedly, we obtain the following conditions for the overall probability of error to be upper bounded by $\epsilon 2^{-\nu n}$. 
	\begin{align}
	\label{Rate Region}
	K \sum_{j=i}^{L} R_{u_j} & \geq  \sum_{j=i}^{L} \log Vol ( (u_j-u_{j-1}) \delta_K K, K)+(L-1)-\log \epsilon 2^{-\nu n},  \forall i \in \{2, 3, \cdots, L\}, \notag \\
	K \sum_{j=1}^{L}  R_{u_j} & \geq K+\sum_{j=2}^{L} \log Vol ((u_{j}-u_{j-1}) \delta_K K, K)+(L-1)-\log \epsilon 2^{-\nu n}.
	\end{align}
	Since $\log Vol (m \delta_K K, K) \leq m \log Vol (\delta_K K, K), \forall m \in \mathbb{Z}^{+}$, it suffices if the rates satisfy  
	\begin{align}
	\label{relaxed rate region}
	K \sum_{j=i}^{L} R_{u_j} & \geq  \sum_{j=i}^{L}  (u_j-u_{j-1}) \log  Vol (\delta_K K, K)+(L-1)-\log \epsilon 2^{-\nu n},  \forall i \in \{2, 3, \cdots, L\}, \notag \\
	K \sum_{j=1}^{L}  R_{u_j} & \geq K+\sum_{j=2}^{L} (u_{j}-u_{j-1}) \log  Vol (\delta_K K, K)+(L-1)-\log \epsilon 2^{-\nu n}.
	\end{align}
	
	The rates chosen in (\ref{eq:SWrate1}), (\ref{eq:SWrate2}) satisfy the above inequalities, therefore our construction has a probability of error bounded by $\epsilon 2^{-\nu n}$. Moreover, as in \cite[Chpater 7]{el2011network}, it follows that there exists a deterministic code that has a probability of error bounded by $\epsilon$. The worst-case storage cost is when a server receives all versions and is given by
	\begin{align*}
	\frac{K-\log \epsilon 2^{-\nu n}}{c}+\frac{(\nu -1) (\log Vol(\delta_K K, K)-\log \epsilon 2^{-\nu n}+\nu/2)}{c}.
\end{align*}
\end{proof}
\noindent Motivated by the fact that linear codes have lower complexity, in the Appendix, we show that linear codes exist that achieve the storage cost of Theorem \ref{Slepian-Wolf Storage Cost Theorem}. Our proof is inspired by \cite{wyner1974recent}.
\begin{remark}
	The proof of Theorem \ref{Slepian-Wolf Storage Cost Theorem} uses simultaneous non-unique decoding ideas \cite{Bandemer_nonunique} used in several multi-user scenarios. In particular, with our non-unique decoding approach to decode $ \mathbf W_{u_L},$ the decoder picks the unique $\mathbf{w}_{u_L}$ such that $(\mathbf w_{u_1}, \mathbf w_{u_2},\ldots, \mathbf w_{u_L}) \in A_{\delta_K}$ for \emph{some} $\mathbf w_{u_1}, \mathbf w_{u_2},\ldots, \mathbf w_{u_{L-1}}$, which are consistent with the bin indices. We use this strategy since unlike the Slepian-Wolf problem where all the messages are to be decoded, we are only required to decode the latest common version. In contrast, the \emph{unique decoding} approach employed by Slepian-Wolf coding would require the decoder to obtain for some subset $S\subseteq \{u_1, u_2,\ldots, u_L\}$ such that $u_L \in S$, the unique $\mathbf w_{S}$ in the possible set that is consistent with the bin-indices; unique decoding, for instance, would not allow for correct decoding if there are multiple possible tuples even if they happen to have the same latest common version $\mathbf{w}_{u_L}$. The discussion in \cite{Bidhokti_nonunique}, which examined the necessity of non-unique decoding, motivates the following question: Can we use the decoding ideas of Slepian-Wolf - where all the messages are decoded - however, for an appropriately chosen subset of versions and have the same rates? In other words, if we take the union of the unique decoding rate regions over all possible subsets of $\{\mathbf W_{u_1}, \mathbf W_{u_2}, \ldots, \mathbf W_{u_L}\}$, does the rate allocation of (\ref{eq:SWrate1}), (\ref{eq:SWrate2}) lie in this region? The answer of this question is that non-unique decoding provides better rates than unique decoding in our case as we explain in Example \ref{counterexample}. 
   \begin{example}
   	\label{counterexample}
	We consider the case where $\delta_K = \delta,$  $c=2$ and $\nu=3$. Consider the state where server $2$ does not receive $\mathbf W_1$. Then, the storage allocation of our scheme is given by Table \ref{table: Example}. We use $KR_u$ to denote the total number of bits stored for version $u \in [\nu]$ in Table \ref{table: Example}.
	\begin{table*}[h]
		\renewcommand{\arraystretch}{1.2}
		\centering
		\begin{tabular}{ |p{2 cm} | p{3 cm} | p{3 cm} | }
			\hline
			\textbf{Version} &\textbf{Server 1} & \textbf{Server 2} \\ \hline
			$\mathbf W_1$ & $\frac{K+o(K)}{2}$ &-
			\\ \hline	
			$\mathbf W_2$ & $\frac{KH(\delta)+o(K)}{2}$ & $\frac{K+KH(\delta)+o(K)}{2}$
			\\ \hline	
			$\mathbf W_3$ & $\frac{KH(\delta)+o(K)}{2}$  & $\frac{KH(\delta)+o(K)}{2}$
			\\ \hline
		\end{tabular}
		\caption{Storage Allocation of Example $1$.}\label{table: Example}
	\end{table*}

We first examine unique decoding based decoders that aim to recover $\mathbf W_{3}$. It is clear that the decoder cannot recover the $K$ bits of $\mathbf W_{3}$ without using side information, since the total number of bits of $\mathbf W_3$ stored is only $KH(\delta)+o(K)$.

We next consider the case where a unique decoding based decoder uses the subset $\{\mathbf W_1, \mathbf W_3\}$.  The rates $(R_1^{(unique, \mathbf W_1, \mathbf W_3)}, R_3^{(unique, \mathbf W_1, \mathbf W_3)})$ for a vanishing probability of error must satisfy 
\begin{align*}
K R_3^{(unique, \mathbf W_1, \mathbf W_3)} &\geq K H(\delta* \delta)+o(K),  \\
K( R_1^{(unique, \mathbf W_1, \mathbf W_3)}+R_3^{(unique, \mathbf W_1, \mathbf W_3)}) &\geq K+ K H(\delta* \delta) + o(K),
\end{align*}
where $\delta * \delta = 2 \delta(1-\delta)$. However, $K R_3 = K H(\delta)+o(K) < KR_{3}^{(unique, \mathbf W_1, \mathbf W_3)}$.

We next consider the case where a unique decoding based decoder uses $\{\mathbf W_2,\mathbf W_3\}$ for decoding.  In this case, the decoder requires
\begin{align*}
K R_3^{(unique, \mathbf W_2, \mathbf W_3)}  &\geq  KH(\delta) +o(K), \\
K(R_2^{(unique, \mathbf W_2, \mathbf W_3)}+ R_3^{(unique, \mathbf W_2, \mathbf W_3)}) &\geq K+ KH(\delta) + o(K)
\end{align*}
We notice $K(R_2+R_3) = K/2 + 2KH(\delta)+o(K)< K(R_2^{(unique, \mathbf W_2, \mathbf W_3)}+ R_3^{(unique, \mathbf W_2, \mathbf W_3)})$, for $\delta < H^{-1}(0.5).$ 

Finally, consider the case where a unique decoding based decoder uses all three messages $\{\mathbf{W}_{1},\mathbf W_2,\mathbf W_3\}.$ In this case, the rate tuples $(R_1^{(unique, \mathbf W_1, \mathbf W_2, \mathbf W_3)}, R_3^{(unique, \mathbf W_1, \mathbf W_2, \mathbf W_3)})$ have to satisfy seven inequalities, including the following inequality
$$K( R_1^{(unique, \mathbf W_1, \mathbf W_2, \mathbf W_3)}+R_3^{(unique, \mathbf W_1, \mathbf W_2, \mathbf W_3)}) \geq K+ K H(\delta* \delta) + o(K)$$
Clearly $K(R_1 + R_3) <  K(R_1^{(unique, \mathbf W_1, \mathbf W_2, \mathbf W_3)}+R_3^{(unique, \mathbf W_1, \mathbf W_2, \mathbf W_3)}).$ 
Thus, the union of the unique decoding rate regions for vanishing error probabilities, taken over all possible subsets of $\{\mathbf W_1, \mathbf W_2, \mathbf W_3\},$ does not include the rate tuple of Table \ref{table: Example}. 
\end{example}
\end{remark}

\begin{remark}
\label{rem:zeroerror}
{
The result of Theorem \ref{Slepian-Wolf Storage Cost Theorem} allows for erroneous decoding with probability at most $\epsilon.$ A natural question is to ask whether a similar storage cost can be obtained if we want the probability of error to be $0$, i.e., if we want our decoder to correctly decode for \emph{every} possible message tuple. The answer to this question in general is connected to the question of whether $0$-error rate region and $\epsilon$-error rate region are identical for our setting. There are several instances in network information theory where, even though there is no noise in the network the $\epsilon$-error capacity is still larger than the zero error capacity, (e.g., multiple access \cite{cover_thomas}). For some networks, the answer to this question is unknown and involves deep connections to other related questions \cite{langberg2011network}. Note also for distributed source coding setups such as our problem, the determination of the $0$-error capacity is more complicated and involves the use of graph entropy \cite{korner1973coding}. In this paper, we leave the question of whether the $0$-error and $\epsilon$-error rate regions are the same, open.}
\end{remark}

\section{Lower Bound on The Storage Cost (Theorem \ref{Lower Bound Theorem})}
\label{Lower Bound on the storage cost}
In this section, we extend the lower bound on the per-server storage cost of \cite{multi_arxiv} for the case where we have correlated versions, and we require the probability of error to be at most $\epsilon$.
\begin{theorem}
	\label{Lower Bound Theorem}[Storage Cost Lower Bound]
	An $\epsilon$-error $(n, c, \nu, 2^K, q, \delta_K)$ multi-version code with correlated versions 
	such that $\mathbf W_{1}\rightarrow \mathbf W_{2} \rightarrow \ldots \rightarrow \mathbf W_{\nu}$ form a Markov chain, $\mathbf W_{m} \in [2^K]$  and given $\mathbf W_{m}$, $\mathbf W_{m+1}$ is uniformly distributed in a Hamming ball of radius $\delta_K K$ centered around $\mathbf W_{m}$ must satisfy 
	\begin{align}
	\log q \geq \frac{K+(\nu-1) \log Vol (\delta_K K, K)}{c+\nu-1} +
	\frac{\log (1-\epsilon 2^{\nu n})-\log \binom{c+\nu-1}{\nu} \nu!}{c+\nu-1},
	\end{align}
	where $\epsilon<1/2^{\nu n}$.	
\end{theorem}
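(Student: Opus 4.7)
The plan is to extend the information-theoretic converse of \cite{multi_arxiv} to accommodate both correlated message versions and the $\epsilon$-error criterion. The overall strategy is to construct a single adversarial system state, isolate a large set of message tuples on which all relevant decoders succeed, and then use a pigeonhole-style counting argument on the encoded symbols.

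First, I would construct the adversarial state. Select $c+\nu-1$ of the $n$ servers, say the first $c+\nu-1$, and prescribe the sliding-window state $\mathbf S^{\ast}(i)=\{\max(1,i-c+1), \ldots, \min(\nu, i)\}$ for each $i\in[c+\nu-1]$, with arbitrary (but fixed) states on the remaining servers. For every $u\in[\nu]$, the subset $T_u=\{u, u+1, \ldots, u+c-1\}$ of $c$ servers has latest common version exactly $u$. Decoding Requirement A guarantees that, with probability at least $1-\epsilon$ over the message tuple, the decoder reading $T_u$ outputs $\mathbf W_{m_u}$ for some $m_u\geq u$. This mirrors the state construction used in the independent-version converse of \cite{multi_arxiv}, and is the only place the particular structure of the multi-version coding problem enters.

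Second, I would convert the per-state error guarantee into a large ``good'' set of deterministic tuples. Since $|\mathcal P([\nu])^n|=2^{\nu n}$, a union bound over all possible system states produces a set $\mathcal G \subseteq A_{\delta_K}$ with $|\mathcal G|\geq (1-\epsilon\cdot 2^{\nu n})\,|A_{\delta_K}|$ on which the decoder succeeds for every state (in particular for $\mathbf S^{\ast}$) and every decoding subset. Because $(\mathbf W_1,\ldots,\mathbf W_\nu)$ is uniform on $A_{\delta_K}$, we have $\log |A_{\delta_K}| = K + (\nu-1)\log Vol(\delta_K K, K)$.

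Finally, I would apply a counting argument to the encoded symbols. Under state $\mathbf S^{\ast}$ the map from good tuples in $\mathcal G$ to encoded vectors $\boldsymbol\alpha=(\alpha_1,\ldots,\alpha_{c+\nu-1})\in[q]^{c+\nu-1}$ must be nearly injective: any two good tuples sharing the same $\boldsymbol\alpha$ must produce the same $\nu$ decoded pairs $(m_u,\hat{\mathbf W}_u)$ at $T_1,\ldots,T_\nu$ and therefore agree on $\mathbf W_{m_u}$ for every $u$. The combinatorial factor $\binom{c+\nu-1}{\nu}\nu!$, which upper-bounds the number of feasible ordered index tuples $(m_1,\ldots,m_\nu)$ with $m_u\geq u$ arising from the $c+\nu-1$ servers, controls the number of good tuples that can collide on a single $\boldsymbol\alpha$. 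Combining yields $q^{c+\nu-1}\cdot\binom{c+\nu-1}{\nu}\nu!\geq |\mathcal G|\geq (1-\epsilon\cdot 2^{\nu n})\cdot 2^{K}\cdot Vol(\delta_K K, K)^{\nu-1}$, and taking logarithms gives the claimed bound.

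The main technical hurdle is rigorously establishing this combinatorial collision bound. Because the decoder is permitted to output any later common version from $T_u$, two tuples in $\mathcal G$ can encode identically yet still be correctly decoded provided their $\mathbf W_{m_u}$'s coincide for every $u$. Pinning down the collision count therefore requires carefully tracking the feasible index tuples together with the Markov-chain compatibility of the versions they are forced to share; everything else, including handling the $\epsilon$ term, is routine rearrangement under the hypothesis $\epsilon<1/2^{\nu n}$ that keeps $\log(1-\epsilon\cdot 2^{\nu n})$ well defined.
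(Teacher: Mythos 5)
Your setup (the union bound over the $2^{\nu n}$ states to extract a good set $\mathcal G$ with $|\mathcal G|\geq(1-\epsilon 2^{\nu n})|A_{\delta_K}|$, and the count $\log|A_{\delta_K}|=K+(\nu-1)\log Vol(\delta_K K,K)$) matches the paper, and the final arithmetic is the same. But the core of the converse --- the inequality $q^{c+\nu-1}\binom{c+\nu-1}{\nu}\nu!\geq|\mathcal G|$ --- is not established by your argument, and the step you defer as ``the main technical hurdle'' is exactly where it breaks. With a \emph{single} fixed sliding-window state $\mathbf S^{\ast}$, the decoder at window $T_u$ is only required to output $\mathbf W_{m_u}$ for \emph{some} $m_u\geq u$; the indices $m_1,\ldots,m_\nu$ need not cover $[\nu]$. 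For example (in the interesting regime $\nu\leq c$, where servers near the middle of the window hold all versions), every window's decoder could legitimately output the latest version $\mathbf W_\nu$, in which case the encoded vector $\boldsymbol\alpha\in[q]^{c+\nu-1}$ is only forced to determine $\mathbf W_\nu$. Two good tuples that agree on $\mathbf W_\nu$ but differ on earlier versions can then collide, so a collision class for a \emph{fixed} index tuple $(m_1,\ldots,m_\nu)$ can have size up to roughly $Vol(\delta_K K,K)^{\nu-1}$, not $1$; the factor $\binom{c+\nu-1}{\nu}\nu!$, which only counts the possible index tuples, cannot absorb this. So the claimed ``near injectivity'' under one state is false for an arbitrary code, and the pigeonhole step does not go through.

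The paper avoids this precisely by not using a single state: for $\nu=2$ it invokes the argument of the prior multi-version coding converse to produce, \emph{for each good tuple}, two states $\mathbf S_1,\mathbf S_2$ differing in one server $B\in[c]$ such that $\mathbf W_1$ is decodable from the first $c$ symbols in $\mathbf S_1$ and $\mathbf W_2$ from the symbols after swapping in server $B$'s symbol under $\mathbf S_2$; the identity of $B$ is the extra $\log c$ bits of side information, giving $c\,q^{c+1}\geq|A_{\delta_K,1}|$. For general $\nu$ it uses the algorithm of the prior work to build a tuple-dependent (adaptive) collection of states and a bijection from the good set to $c+\nu-1$ codeword symbols plus bounded side information $(B_{[\nu]},\Pi)$, which is where $\binom{c+\nu-1}{\nu}\nu!$ legitimately enters. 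If a decoder returns a later version, that adaptive construction ``rewinds'' to a state in which the later versions have not yet arrived, forcing earlier versions to be pinned down --- this is the mechanism your single-state argument is missing. To repair your proof you would need to import that multi-state, tuple-adaptive construction (or an equivalent one); the rest of your write-up is fine.
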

\begin{proof}[Proof of Theorem \ref{Lower Bound Theorem} for $\nu=2$]
	Consider any $\epsilon$-error $(n,c,2, 2^K, q, \delta_K)$ multi-version code, and consider the first $c$ servers, $T=[c]$, for decoding. We recall that the set of possible tuples $A_{\delta_K}$ is partitioned into disjoint sets as follows
   \begin{align*}
A_{\delta_K}= A_{\delta_K, 1} \cup A_{\delta_K, 2},
\end{align*}
	where $A_{\delta_K, 1}$ is the set of tuples $(\mathbf{w}_1, \mathbf{w}_2) \in A_{\delta_K}$ for which we can decode successfully for all $\mathbf S  \in \mathcal{P}([\nu])^n$ and $A_{\delta_K, 2}$ is the set of tuples where we cannot decode successfully at least for one state $\mathbf S  \in \mathcal{P}([\nu])^n$, which can be expressed as follows   
	\begin{align}
	A_{\delta_K, 2}= \bigcup_{\mathbf{S} \in \mathcal{P}([\nu])^n} A_{\delta_K, 2}^{(\mathbf S)},
	\end{align}
	where  $A_{\delta_K, 2}^{(\mathbf S)}$ is the set of tuples for which we cannot decode successfully given a particular state $\mathbf S \in \mathcal{P}([\nu])^n$. Consequently, we have
	\begin{align}
	|A_{\delta_K, 2}| \leq \sum\limits_{\mathbf S \in \mathcal{P}([\nu])^n} |A_{\delta_K, 2}^{(\mathbf S)}|.
	\end{align}
	For any state $\mathbf S \in \mathcal{P}([\nu])^n$, we require the probability of error, $P_e$, to be at most $\epsilon$. Since all tuples in the set $A_{\delta_K}$ are equiprobable, we have
	\begin{align}
	P_e=\frac{|A_{\delta_K, 2}^{(\mathbf S)}|}{|A_{\delta_K}|} \leq \epsilon.
	\end{align}
	Therefore, we have
	\begin{align}
	\label{lower bound on A1}
	|A_{\delta_K, 1}|&=|A_{\delta_K}|-|A_{\delta_K, 2}| \notag \\ 
	& \geq  |A_{\delta_K}|-\sum_{\mathbf S \in \mathcal{P}([\nu])^n} |A_{\delta_K, 2}^{(\mathbf S)}| \notag  \\ & \geq |A_{\delta_K}|-\sum_{\mathbf S \in \mathcal{P}([\nu])^n}  \epsilon  |A_{\delta_K}| \notag \\
	& \geq   |A_{\delta_K}|(1-\epsilon  2^{\nu n} ).
	\end{align}   
	Suppose that $(\mathbf W_1, \mathbf W_2) \in A_{\delta_K, 1}.$ Because of the decoding requirements, if $\mathbf W_1$ is available at all servers, the decoder must be able to obtain $\mathbf W_1$ and if $\mathbf W_2$ is available at all servers, the decoder must return $\mathbf W_2$. Hence, as shown in  \cite{multi_arxiv}, there exist $\mathbf{S}_1, \mathbf{S}_2 \in \mathcal{P}([\nu])^n$ such that 
	\begin{itemize}
	\item $\mathbf{S}_1$ and $\mathbf{S}_2$ differ only in the state of one server indexed by $B \in [c]$, and
	\item $\mathbf W_1$ can be recovered from the first $c$ servers in state $\mathbf{S}_1$ and $\mathbf W_2$ can be recovered from the first $c$ servers in $\mathbf{S}_2$.
	\end{itemize}
	Therefore both $\mathbf W_1$ and $\mathbf W_2$ are decodable from the $c$ codeword symbols of the first $c$ servers in state $\mathbf{S}_1$, and the codeword symbol of the $B$-th server in state $\mathbf{S}_2$. Thus, we require the following 
	\begin{align}
	c \ q^{c+1} &\geq |A_{\delta_K, 1}|  \notag \\ & \geq |A_{\delta_K}|(1-\epsilon  2^{\nu n} ).
	\end{align}
We also have $|A_{\delta_K}| = 2^K Vol(\delta_K K, K)$. Therefore, the storage cost is lower-bounded as follows
	\begin{align}
	\log q  \geq \frac{K+ \log  Vol(\delta_K K, K)}{c+1}+\frac{\log (1-\epsilon  2^{\nu n} )-\log c}{c+1}.    
	\end{align}
\end{proof}

\label{proof for any number of versions}
\noindent We now provide a proof sketch for the case where $\nu \geq 3$.
\begin{proof}[Proof sketch of Theorem \ref{Lower Bound Theorem} for $\nu \geq 3$] Consider any $\epsilon$-error $(n, c, \nu, 2^K, q, \delta_K)$ multi-version code, and consider the first $c \leq n$ servers, $T=[c]$, for decoding.  Suppose we have $\nu$ versions $\mathbf W_{[\nu]}=(\mathbf W_1, \mathbf W_2, \cdots, \mathbf W_{\nu})$.  Suppose that $\mathbf W_{[\nu]} \in A_{\delta_K, 1}$. We construct auxiliary variables $Y_{[c-1]}, Z_{[\nu]}, B_{[\nu]}$, where $Y_i, Z_j \in [q], i \in [c-1], j \in [\nu]$  , $1 \leq B_1 \leq \cdots \leq B_{\nu} \leq c$ and a permutation $\Pi: [\nu] \rightarrow [\nu]$, such that there is a bijection mapping from these variables to $A_{\delta_K, 1}$. In order to construct these auxiliary variables, we use the algorithm of \cite{multi_arxiv}. 
	Therefore, we have 
	\begin{align}
	q^{c+\nu-1} \binom{c+\nu-1}{\nu} \nu! 
	& \geq |A_{\delta_K, 1}| \notag \\ &> |A_{\delta_K}|(1-\epsilon  2^{\nu n} ),
	\end{align}
	where the first inequality follows since $Y_i, Z_j \in [q]$, there are at most $\binom{c+\nu-1}{\nu}$ possibilities of $B_{[\nu]}$ and at most $\nu!$ possible permutations. We also have $|A_{\delta_K}| = 2^K Vol(\delta_K K, K)^{(\nu-1)}$. Therefore, the storage cost is lower-bounded as follows  
	\begin{align}
	\log q \geq \frac{K+(\nu-1) \log Vol (\delta_K K, K)}{c+\nu-1} +
	\frac{\log (1-\epsilon 2^{\nu n})-\log \binom{c+\nu-1}{\nu} \nu!}{c+\nu-1}. 
	\end{align}
\end{proof}



\color{black}

\section*{Acknowledgement}
The authors would like to thank A. Tulino and J. Llorca for their helpful comments.
\section{Conclusion}  
\label{Conclusion}
In this paper, we have proposed multi-version codes to efficiently store correlated updates of data in a decentralized asynchronous storage system. These constructions are based on Reed-Solomon codes and random binning. An outcome of our results is that the correlation between versions can be used to reduce storage costs in asynchronous decentralized systems, even if there is no single server or client node who is aware of all data versions, in applications where consistency is important. In addition, our converse result shows that these constructions are within a factor of $2$ from the information-theoretic optimum in certain interesting regimes. The development of practical coding schemes for the case where $\delta_K$ is known a priori is an open research question, which would require non-trivial generalizations of previous code constructions for the Slepian-Wolf problem \cite{pradhan2003distributed, schonberg2004distributed}.

\appendix
\label{sec:appendix}

In this appendix, we show that there exist \emph{linear} codes that achieve the storage cost of Theorem \ref{Slepian-Wolf Storage Cost Theorem}. The proof uses linear binning instead of random binning, but mirrors the random binning proof in other respects and we focus on the key differences here.  
\begin{lemma}
\label{lemma:linearbinning}
Let $G$ be an $N \times M$ matrix whose entries are chosen according to Bernoulli($p$) independently of each other. Let $\mathbf{u}$ be any non-zero $N \times 1$ vector. We have 
\begin{align}
\mathbb{P}(\mathbf{u} ^{\rm T} G= 0) = ((1+(1-2p)^{w_H(u)})/2)^M.
\end{align}
\end{lemma}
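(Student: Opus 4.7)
The plan is to exploit the column-wise independence of $G$ to reduce to a single-column probability, then use a standard generating-function/parity trick to evaluate that probability.

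First I would write $G = (\mathbf{g}_1, \mathbf{g}_2, \ldots, \mathbf{g}_M)$ where the columns $\mathbf{g}_j$ are mutually independent with i.i.d.\ Bernoulli$(p)$ entries. Since $\mathbf{u}^{\rm T} G = (\mathbf{u}^{\rm T}\mathbf{g}_1, \ldots, \mathbf{u}^{\rm T}\mathbf{g}_M)$, the event $\{\mathbf{u}^{\rm T} G = 0\}$ is the intersection of the independent events $\{\mathbf{u}^{\rm T}\mathbf{g}_j = 0\}$, so
\begin{align*}
\mathbb{P}(\mathbf{u}^{\rm T} G = 0) = \mathbb{P}(\mathbf{u}^{\rm T}\mathbf{g} = 0)^{M},
\end{align*}
where $\mathbf{g}$ is a single column. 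It then remains to show that $\mathbb{P}(\mathbf{u}^{\rm T}\mathbf{g} = 0) = (1 + (1-2p)^{w_H(\mathbf{u})})/2$.

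Next, since the arithmetic is over $\mathbb{F}_2$, only the entries of $\mathbf{g}$ in positions where $u_i = 1$ contribute, so $\mathbf{u}^{\rm T}\mathbf{g}$ is the mod-$2$ sum of $w \coloneqq w_H(\mathbf{u})$ i.i.d.\ Bernoulli$(p)$ bits; call this sum $S$. I would compute $\mathbb{P}(S = 0)$ via the parity generating function, using
\begin{align*}
\mathbb{E}[(-1)^S] = \mathbb{E}\Bigl[\prod_{i:u_i=1}(-1)^{g_i}\Bigr] = \bigl((1-p)\cdot 1 + p\cdot(-1)\bigr)^{w} = (1-2p)^{w},
\end{align*}
by independence and the fact that each $g_i$ is Bernoulli$(p)$. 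Since $\mathbb{E}[(-1)^S] = \mathbb{P}(S=0) - \mathbb{P}(S=1)$ and $\mathbb{P}(S=0) + \mathbb{P}(S=1) = 1$, solving gives $\mathbb{P}(S = 0) = (1 + (1-2p)^{w})/2$.

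Combining the two steps yields the claimed formula. The hypothesis $\mathbf{u} \neq 0$, i.e.\ $w \geq 1$, is not strictly needed for the algebra but is the relevant regime (if $\mathbf{u} = 0$ the probability is trivially $1$, matching $(1+1)/2 = 1$ as well). There is no real obstacle here: the only thing to be careful about is the reduction to the single-column computation and the correct use of arithmetic over $\mathbb{F}_2$ versus the real-valued expectation of $(-1)^S$; I would state explicitly that $(-1)^{a+b \bmod 2} = (-1)^{a+b}$ so that the parity trick is valid.
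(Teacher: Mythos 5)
Your proof is correct and follows essentially the same route as the paper: reduce to a single column by independence, then compute the probability that the mod-$2$ sum of $w_H(\mathbf{u})$ Bernoulli($p$) bits is zero. The only difference is that you explicitly derive the even-parity formula $(1+(1-2p)^{w})/2$ via $\mathbb{E}[(-1)^S]$, a step the paper simply asserts with ``it can be shown.''
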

\begin{proof}
 Consider $k$ Bernoulli trials where the probability of success of each trial is $p$. It can be shown that an even number of successes among the $k$ trials occurs with probability $(1+(1-2p)^k)/2.$
Therefore, we have $\mathbb{P}(\mathbf{u}^{\rm T} G= 0)=((1+(1-2p)^{w_H(u)})/2)^M.$
\end{proof}
\noindent We next explain the code construction and provide an alternate proof for Theorem \ref{Slepian-Wolf Storage Cost Theorem}.
\begin{construction}[Random Linear binning multi-version code] 
Suppose that the $i$-th server receives the versions $\mathbf S(i)=\{s_1, s_2, \cdots, s_{|\mathbf S(i)|}\} \subseteq [\nu]$, where $s_1<s_2<\cdots<s_{|\mathbf S(i)|}$.
		\begin{itemize}
			\item  \emph{Random code generation:} At the $i$-th server, for version $s_j$ the encoder creates a random binary matrix $G_{s_j}^{(i)}$ with $K$ rows and $(K+(\nu-1) \log Vol(\delta_K K, K)+(\nu-1)-\log \epsilon 2^{-\nu n})/c$ columns, where each entry is chosen as Bernoulli($1/2$) independently of all the other entries in the matrix and all other matrices. We denote by ${G}_{s_j, m}^{(i)}$ the first $m$ columns of ${G}_{s_j}^{(i)}$.

			\item \emph{Encoding}: The server stores $\mathbf{W}_{s_j}^{\rm T} {G}_{s_j, KR_{s_j}^{(i)}/c}^{(\ell)}$ for version $s_j$, where  $R_{s_j}^{(i)}/c$ is the rate assigned by the $i$-th server to version $s_j$. The decoder is also aware of the matrix ${G}_{s_j}^{(i)}$ a priori. The encoding function of the $i$-th server is defined as follows
			\begin{align}
			\varphi_S^{(i)}=(\mathbf{W}_{s_1}^{\rm T}{G}^{(i)}_{s_1, KR_{s_1}^{(i)}/c}, \mathbf{W}_{s_2}^{\rm T} {G}^{(i)}_{s_2, KR_{s_2}^{(i)}/c}, \ldots, \mathbf{W}_{s_{|\mathbf S(i)|}}^{\rm T} {G}^{(i)}_{s_{|\mathbf S(i)|}, KR_{s_{|\mathbf S(i)|}}^{(i)}/c} ),
			\end{align}
where we choose the rates as given by (\ref{eq:SWrate1}), (\ref{eq:SWrate2}). 

			\item \emph{Decoding}:
			Consider a state $\mathbf{S} \in \mathcal{P}([\nu])^n$ and assume that the decoder connects to the servers $T=\lbrace t_1, t_2, \cdots, t_c \rbrace$. Let $\mathbf W_{u_L}$ be the latest common version among these servers and that the versions $\mathbf W_{u_1}, \mathbf W_{u_2}, \cdots, \mathbf W_{u_{L-1}}$ are the older versions such that each is received by at least one server out of those $c$ servers. This set of versions is denoted by $\mathbf S_{\rm T}$ and defined in (\ref{set of versions}). Given the bin indices $(b_{u_1}, b_{u_2}, \cdots, b_{u_L})$, the decoder finds all tuples $(\mathbf w_{u_1}, \mathbf w_{u_2}, \cdots, \mathbf w_{u_L}) \in A_{\delta_K}$ such that $(\varphi_{u_1} (\mathbf w_{u_1})=b_{u_1}, \varphi_{u_2} (\mathbf w_{u_2})=b_{u_2}, \cdots, \varphi_{u_L} (\mathbf w_{u_L})=b_{u_L})$. If all tuples have the same latest common version $\mathbf w_{u_L}$, the decoder declares $\mathbf w_{u_L}$ to be the estimate of the latest common version $\mathbf{\hat W_{u_L}}$. Otherwise, the decoder declares an error. 
		\end{itemize}
\end{construction}
\begin{proof}[Proof of Theorem \ref{Slepian-Wolf Storage Cost Theorem} using linear binning]
The probability of error in decoding the latest common version among the $c$ servers is upper-bounded as follows
\begin{equation*}
P_e (\mathbf S, T)= P(E)  =P \left( \bigcup_{ \mathcal I \subseteq \mathbf {S}_T: u_L \in \mathcal I} E_{ \mathcal I}\right) 
\leq \sum_{ \mathcal I \subseteq {\mathbf{S}_T: u_L \in \mathcal I}} P(E_{ \mathcal I}),
\end{equation*}
	and we require that $P_e (\mathbf S, T) \leq \epsilon 2^{-\nu n}$. We proceed in a case by case manner  and first consider the case where $u_{L-1} \notin \mathcal I$, later we consider the case where $u_{L-1} \in \mathcal{I}$. For the case where $u_{L-1} \notin \mathcal I$, we have the following
	\begin{align}
	E_{ \mathcal I} \subset \tilde E_{u_{L-1}} &\coloneqq \{\exists \mathbf w_{u_L}' \neq \mathbf W_{u_L}:  \mathbf w_{u_L}'^{\rm T} {G}^{(t_i)}_{u_L, KR_{u_L}^{(t_i)}/c} = \mathbf W_{u_L}^{\rm T}{G}^{(t_i)}_{u_L, KR_{u_L}^{(t_i)}/c}, \forall i\in[c] \notag \\ & ~~~~~~\text{and} \ (\mathbf W_{u_{L-1}}, \mathbf w_{u_L}') \in A_{\delta_K} \}. 
	\end{align}
	\noindent Consequently, we have
	$P(E_{ \mathcal I}) < P(\tilde E_{u_{L-1}}),$
	and we can upper-bound $P(\tilde E_{u_{L-1}})$ as follows
	\begin{align}
	P(\tilde E_{u_{L-1}}) &= \sum_{(\mathbf w_{u_{L-1}}, \mathbf w_{u_L})} p(\mathbf w_{u_{L-1}}, \mathbf w_{u_L}) \\&P\left( \exists \mathbf w_{u_L}' \neq \mathbf w_{u_L}: \mathbf w_{u_L}'^{\rm T}{G}^{(t_i)}_{u_L,  KR_{u_L}^{(t_i)}/c} =\mathbf w_{u_L}^{\rm T}{G}^{(t_i)}_{u_L, KR_{u_L}^{(t_i)}/c}, \forall i\in[c], \ (\mathbf w_{u_{L-1}}, \mathbf w_{u_L}') \in A_{\delta_K}\right) \notag \\
	& \stackrel{(a)}{\leq} \sum_{(\mathbf w_{u_{L-1}}, \mathbf w_{u_L})} p(\mathbf w_{u_{L-1}}, \mathbf w_{u_L})\sum_{\substack{\mathbf w_{u_L}' \neq \mathbf w_{u_L}\\ (\mathbf w_{u_{L-1}}, \mathbf w_{u_L}') \in A_{\delta_K}}} \prod_{i=1}^{c}P\left( (\mathbf w_{u_L}'+ \mathbf w_{u_L})^{\rm T}{G}^{(t_i)}_{u_L, KR_{u_L}^{(t_i)}/c}=0\right) \notag  \\
	& \stackrel{(b)}{=} \sum_{(\mathbf w_{u_{L-1}}, \mathbf w_{u_L})} p(\mathbf w_{u_{L-1}}, \mathbf w_{u_L}) Vol ((u_L-u_{L-1}) \delta_K K, K) \prod_{i=1}^{c}2^{-KR^{(t_i)}_{u_L}/c} \notag \\
	& = 2^{-(KR_{u_L} -\log Vol ((u_L-u_{L-1}) \delta_K K, K))}.
	\end{align} 
where $(a)$ follows since the matrices ${G}^{(t_1)}, {G}^{(t_2)},\ldots, {G}^{(t_c)}$ are chosen independently and $(b)$ follows from Lemma \ref{lemma:linearbinning}. Choosing $R_{u_L}$ to satisfy 
	$KR_{u_L} \geq \log Vol ((u_L-u_{L-1}) \delta_K K, K)+(L-1)- \log \epsilon 2^{-\nu n}$ 
	ensures that $P(E_\mathcal I) \leq \epsilon 2^{-(L-1)} 2^{-\nu n}$. \\ Now, we consider the case where $u_{L-1} \in \mathcal I$. In this case, we consider the following two cases.
	First, we consider the case where $u_{L-2} \notin \mathcal I $, later we consider the case where $u_{L-2} \in \mathcal{I}$. For the case where $u_{L-2} \notin \mathcal I,$ we have 
	\begin{align}
	E_\mathcal I \subseteq \tilde E_{u_{L-2}} &\coloneqq \{\exists \mathbf w_{u_{L-1}}' \neq \mathbf W_{u_{L-1}}, \mathbf  w_{u_L}' \neq \mathbf W_{u_L}:  \mathbf  w_{u_{L-1}}'^{\rm T} {G}^{(t_i)}_{u_{L-1}, KR_{u_{L-1}}^{(t_i)}/c}= \mathbf  W_{u_{L-1}}^{\rm T}{G}^{(t_i)}_{u_{L-1}, KR_{u_{L-1}}^{(t_i)}/c}, \notag \\ &~~~~\mathbf  w_{u_L}'^{\rm T}{G}^{(t_i)}_{u_{L}, KR_{u_{L}}^{(t_i)}/c}= \mathbf  W_{u_L}^{\rm T}{G}^{(t_i)}_{u_{L}, KR_{u_{L}}^{(t_i)}/c}  \ \text{and} \ (\mathbf  W_{u_{L-2}}, \mathbf w_{u_{L-1}}', \mathbf w_{u_L}') \in A_{\delta_K} \}. 
	\end{align}
	In this case, we choose the rates to satisfy $K(R_{u_{L-1}}+R_{u_L})  \geq \sum\limits_{j=L-1}^{L} \log Vol ( (u_{j}-u_{j-1}) \delta_K K, K)+(L-1)-\log \epsilon 2^{-\nu n}.$	
	By applying the above argument repeatedly, we obtain the region in (\ref{Rate Region}).
\end{proof}

}

\bibliographystyle
{IEEEtran}
\bibliography{IEEEabrv,Nulls}

\begin{thebibliography}{10}
\providecommand{\url}[1]{#1}
\csname url@samestyle\endcsname
\providecommand{\newblock}{\relax}
\providecommand{\bibinfo}[2]{#2}
\providecommand{\BIBentrySTDinterwordspacing}{\spaceskip=0pt\relax}
\providecommand{\BIBentryALTinterwordstretchfactor}{4}
\providecommand{\BIBentryALTinterwordspacing}{\spaceskip=\fontdimen2\font plus
\BIBentryALTinterwordstretchfactor\fontdimen3\font minus
  \fontdimen4\font\relax}
\providecommand{\BIBforeignlanguage}[2]{{%
\expandafter\ifx\csname l@#1\endcsname\relax
\typeout{** WARNING: IEEEtran.bst: No hyphenation pattern has been}%
\typeout{** loaded for the language `#1'. Using the pattern for}%
\typeout{** the default language instead.}%
\else
\language=\csname l@#1\endcsname
\fi
#2}}
\providecommand{\BIBdecl}{\relax}
\BIBdecl

\bibitem{ali2016consistent}
R.~E. Ali and V.~R. Cadambe, ``Consistent distributed storage of correlated
  data updates via multi-version coding,'' in \emph{Information Theory Workshop
  (ITW), 2016, IEEE}, pp. 176--180.

\bibitem{Decandia}
G.~DeCandia, D.~Hastorun, M.~Jampani, G.~Kakulapati, A.~Lakshman, A.~Pilchin,
  S.~Sivasubramanian, P.~Vosshall, and W.~Vogels, ``{Dynamo: Amazon's highly
  available key-value store},'' in \emph{SOSP}, vol.~7, 2007, pp. 205--220.

\bibitem{corbett2013spanner}
J.~C. Corbett \emph{et~al.}, ``Spanner: Google’s globally distributed
  database,'' \emph{ACM Transactions on Computer Systems (TOCS)}, vol.~31,
  no.~3, p.~8, 2013.

\bibitem{CassandradB}
E.~Hewitt, \emph{Cassandra: the definitive guide}.\hskip 1em plus 0.5em minus
  0.4em\relax O'Reilly Media, Inc., 2010.

\bibitem{EC-Cache}
K.~Rashmi, M.~Chowdhury, J.~Kosaian, I.~Stoica, and K.~Ramchandran, ``Ec-cache:
  Load-balanced, low-latency cluster caching with online erasure coding.'' in
  \emph{OSDI}, 2016, pp. 401--417.

\bibitem{taranov2018fast}
K.~Taranov, G.~Alonso, and T.~Hoefler, ``Fast and strongly-consistent per-item
  resilience in key-value stores,'' in \emph{Proceedings of the Thirteenth
  EuroSys Conference}.\hskip 1em plus 0.5em minus 0.4em\relax ACM, 2018, p.~39.

\bibitem{Giza}
Y.~L. Chen, S.~Mu, J.~Li, C.~Huang, J.~Li, A.~Ogus, and D.~Phillips, ``Giza:
  Erasure coding objects across global data centers,'' in \emph{Proc. USENIX
  Annu. Tech. Conf.(USENIX ATC)}, 2017.

\bibitem{deltacompression}
R.~J. Morris, ``System and method for reducing storage requirement in backup
  subsystems utilizing segmented compression and differencing,'' Sep.~22 1998,
  uS Patent 5,813,017.

\bibitem{Lynch1996}
N.~A. Lynch, \emph{Distributed Algorithms}.\hskip 1em plus 0.5em minus
  0.4em\relax San Francisco, CA, USA: Morgan Kaufmann Publishers Inc., 1996.

\bibitem{Dimakis}
A.~G. Dimakis, P.~B. Godfrey, Y.~Wu, M.~J. Wainwright, and K.~Ramchandran,
  ``Network coding for distributed storage systems,'' \emph{IEEE Transactions
  on Information Theory}, vol.~56, no.~9, pp. 4539--4551, 2010.

\bibitem{Tamo_Barg}
I.~Tamo and A.~Barg, ``A family of optimal locally recoverable codes,''
  \emph{IEEE Transactions on Information Theory}, vol.~60, no.~8, pp.
  4661--4676, 2014.

\bibitem{CadambeCoded_NCA}
V.~R. Cadambe, N.~Lynch, M.~Medard, and P.~Musial, ``A coded shared atomic
  memory algorithm for message passing architectures,'' in \emph{2014 IEEE
  International Symposium on Network Computing and Applications}.\hskip 1em
  plus 0.5em minus 0.4em\relax IEEE, pp. 253--260.

\bibitem{Dutta}
P.~Dutta, R.~Guerraoui, and R.~R. Levy, ``Optimistic erasure-coded distributed
  storage,'' in \emph{Distributed Computing}.\hskip 1em plus 0.5em minus
  0.4em\relax Springer, 2008, pp. 182--196.

\bibitem{multi_arxiv}
Z.~Wang and V.~R. Cadambe, ``Multi-version coding-an information-theoretic
  perspective of consistent distributed storage,'' \emph{IEEE Transactions on
  Information Theory}, vol.~64, no.~6, pp. 4540--4561, June 2018.

\bibitem{zorgui2018storage}
M.~Zorgui, R.~Mateescu, F.~Blagojevic, C.~Guyot, and Z.~Wang,
  ``Storage-efficient shared memory emulation,'' \emph{arXiv preprint
  arXiv:1803.01098}, 2018.

\bibitem{Cadambe_Wang_Lynch2016}
V.~R. Cadambe, Z.~Wang, and N.~Lynch, ``Information-theoretic lower bounds on
  the storage cost of shared memory emulation,'' in \emph{Proceedings of the
  ninth annual ACM symposium on Principles of distributed computing}, ser. PODC
  '16.\hskip 1em plus 0.5em minus 0.4em\relax ACM, 2016, pp. 305--314.

\bibitem{slepian1973noiseless}
D.~Slepian and J.~K. Wolf, ``Noiseless coding of correlated information
  sources,'' \emph{IEEE Transactions on Information theory}, vol.~19, no.~4,
  pp. 471--480, 1973.

\bibitem{wyner1974recent}
A.~Wyner, ``Recent results in the shannon theory,'' \emph{IEEE Transactions on
  information Theory}, vol.~20, no.~1, pp. 2--10, 1974.

\bibitem{pradhan2003distributed}
S.~S. Pradhan and K.~Ramchandran, ``Distributed source coding using syndromes
  (discus): Design and construction,'' \emph{IEEE Transactions on Information
  Theory}, vol.~49, no.~3, pp. 626--643, 2003.

\bibitem{schonberg2004distributed}
D.~Schonberg, K.~Ramchandran, and S.~S. Pradhan, ``Distributed code
  constructions for the entire slepian-wolf rate region for arbitrarily
  correlated sources,'' in \emph{Data Compression Conference Proceedings (DCC),
  IEEE}, 2004, pp. 292--301.

\bibitem{schonberg2007practical}
D.~H. Schonberg, \emph{Practical distributed source coding and its application
  to the compression of encrypted data}.\hskip 1em plus 0.5em minus 0.4em\relax
  University of California, Berkeley, 2007.

\bibitem{wang2015file}
Q.~Wang, V.~Cadambe, S.~Jaggi, M.~Schwartz, and M.~M{\'e}dard, ``File updates
  under random/arbitrary insertions and deletions,'' in \emph{IEEE Information
  Theory Workshop (ITW)}, 2015, pp. 1--5.

\bibitem{ma2012compression}
N.~Ma, K.~Ramchandran, and D.~Tse, ``A compression algorithm using mis-aligned
  side-information,'' in \emph{IEEE International Symposium on Information
  Theory Proceedings (ISIT)}, 2012, pp. 16--20.

\bibitem{harshan2015compressed}
J.~Harshan, F.~Oggier, and A.~Datta, ``Sparsity exploiting erasure coding for
  resilient storage and efficient {I/O} access in delta based versioning
  systems,'' \emph{IEEE 35th International Conference on Distributed Computing
  Systems, 2015}, pp. 798--799.

\bibitem{milenkovic_updates}
S.~El~Rouayheb, S.~Goparaju, H.~M. Kiah, and O.~Milenkovic, ``Synchronization
  and deduplication in coded distributed storage networks,'' \emph{IEEE/ACM
  Transactions on Networking}, vol.~24, no.~5, pp. 3056--3069, 2016.

\bibitem{anthapadmanabhan2010update}
N.~P. Anthapadmanabhan, E.~Soljanin, and S.~Vishwanath, ``Update-efficient
  codes for erasure correction,'' in \emph{Allerton Conference on
  Communication, Control, and Computing, 2010}.\hskip 1em plus 0.5em minus
  0.4em\relax IEEE, pp. 376--382.

\bibitem{mazumdar2014update}
A.~Mazumdar, V.~Chandar, and G.~W. Wornell, ``Update-efficiency and local
  repairability limits for capacity approaching codes,'' \emph{IEEE Journal on
  Selected Areas in Communications}, vol.~32, no.~5, pp. 976--988, 2014.

\bibitem{nakkiran2014fundamental}
P.~Nakkiran, N.~B. Shah, and K.~Rashmi, ``Fundamental limits on communication
  for oblivious updates in storage networks,'' in \emph{Global Communications
  Conference (GLOBECOM), 2014, IEEE}, pp. 2363--2368.

\bibitem{Prakash_Medard}
N.~Prakash and M.~M{\'e}dard, ``Communication cost for updating linear
  functions when message updates are sparse: Connections to maximally
  recoverable codes,'' \emph{IEEE Transactions on Information Theory}, 2018.

\bibitem{Tulino_Correlated}
P.~Hassanzadeh, A.~M. Tulino, J.~Llorca, and E.~Erkip, ``On coding for
  cache-aided delivery of dynamic correlated content,'' \emph{IEEE Journal on
  Selected Areas in Communications}, 2018.

\bibitem{cover_thomas}
T.~M. Cover and J.~A. Thomas, \emph{Elements of information theory}.\hskip 1em
  plus 0.5em minus 0.4em\relax John Wiley \& Sons, 2012.

\bibitem{ABD}
H.~Attiya, A.~Bar-Noy, and D.~Dolev, ``Sharing memory robustly in
  message-passing systems,'' \emph{J. ACM}, vol.~42, no.~1, pp. 124--142, Jan.
  1995.

\bibitem{lamport1978time}
L.~Lamport, ``Time, clocks, and the ordering of events in a distributed
  system,'' \emph{Communications of the ACM}, vol.~21, no.~7, pp. 558--565,
  1978.

\bibitem{chen2017giza}
Y.~L. Chen, S.~Mu, J.~Li, C.~Huang, J.~Li, A.~Ogus, and D.~Phillips, ``Giza:
  Erasure coding objects across global data centers,'' in \emph{Proc. USENIX
  Annu. Tech. Conf.(USENIX ATC)}, 2017.

\bibitem{wolf1973data}
J.~K. Wolf, ``Data reduction for multiple correlated sources,'' in \emph{Proc.
  5th Colloquium Microwave Communication}, 1973, pp. 287--295.

\bibitem{ahlswede1975source}
R.~Ahlswede and J.~Korner, ``Source coding with side information and a converse
  for degraded broadcast channels,'' \emph{IEEE Transactions on Information
  Theory}, vol.~21, no.~6, pp. 629--637, 1975.

\bibitem{wyner1975source}
A.~Wyner, ``On source coding with side information at the decoder,'' \emph{IEEE
  Transactions on Information Theory}, vol.~21, no.~3, pp. 294--300, 1975.

\bibitem{cover1975proof}
T.~Cover, ``A proof of the data compression theorem of slepian and wolf for
  ergodic sources (corresp.),'' \emph{IEEE Transactions on Information Theory},
  vol.~21, no.~2, pp. 226--228, 1975.

\bibitem{el2011network}
A.~El~Gamal and Y.-H. Kim, \emph{Network information theory}.\hskip 1em plus
  0.5em minus 0.4em\relax Cambridge university press, 2011.

\bibitem{Bandemer_nonunique}
B.~Bandemer, A.~El~Gamal, and Y.-H. Kim, ``Simultaneous nonunique decoding is
  rate-optimal,'' in \emph{Communication, Control, and Computing (Allerton),
  2012 50th Annual Allerton Conference on}.\hskip 1em plus 0.5em minus
  0.4em\relax IEEE, 2012, pp. 9--16.

\bibitem{Bidhokti_nonunique}
S.~S. Bidokhti and V.~M. Prabhakaran, ``Is non-unique decoding necessary?''
  \emph{IEEE Transactions on Information Theory}, vol.~60, no.~5, pp.
  2594--2610, 2014.

\bibitem{langberg2011network}
M.~Langberg and M.~Effros, ``Network coding: Is zero error always possible?''
  in \emph{2011 49th Annual Allerton Conference on Communication, Control, and
  Computing (Allerton)}.\hskip 1em plus 0.5em minus 0.4em\relax IEEE, 2011, pp.
  1478--1485.

\bibitem{korner1973coding}
J.~K{\"o}rner, ``Coding of an information source having ambiguous alphabet and
  the entropy of graphs,'' in \emph{6th Prague conference on information
  theory}, 1973, pp. 411--425.

\end{thebibliography}

\end{document}